\newtheorem{thm}{Theorem}
\newtheorem{lem}{Lemma}
\newtheorem{defn}{Definition}
\newtheorem{prop}{Proposition}
\newtheorem{rem}{Remark}
\theoremstyle{definition}
\newtheorem{defn2}{Example}
\newenvironment{defnbis}[1]
  {%
   \addtocounter{defn}{-1}%
   \begin{defn}}
  {\end{defn}}
\newcommand{\Complex}{\mathbb{C}}
\newcommand{\Rational}{\mathbb{Q}}
\newcommand{\tree}{%
  \vcenter{\hbox{\tikz[node distance=2.5ex]{%
     \draw[thick] (5,-0.20) -- (5,0) -- (4.85,0.10) -- (5,0) -- (5.15,0.10) ;
}}}}
\newcommand{\bridge}{\raisebox{1.75mm}{$\frown$}\hspace{-4.5mm}\vee}
\begin{document}
\title{Hopf Algebras and Topological Recursion}
\author{Jo\~{a}o N. Esteves}
\address{CAMGSD, Departamento de Matem\'{a}tica, Instituto Superior T\'{e}cnico, Av. Rovisco Pais 1, 1049-001 Lisboa, Portugal}
\email{joao.n.esteves@tecnico.ulisboa.pt}
\thanks{The author was supported by Funda\c{c}\~ao para a Ci\^{e}ncia e a Tecnologia through the grant SFRH/BPD/77123/2011. He also wishes to thank Jos\'{e} Mour\~ao and Nicolas Orantin for fruitful discussions.}
\keywords{Hopf Algebras, Topological Recursion, Matrix Models}
\begin{abstract}
We consider a model for topological recursion based on the Hopf Algebra of planar binary trees of Loday and Ronco. We show that extending this Hopf Algebra by identifying pairs of nearest neighbor leaves and thus producing graphs with loops we obtain the full recursion formula of Eynard and Orantin.
\end{abstract}

\maketitle
\tableofcontents

\section{Introduction}The use of graphs, in particular of trees, binary trees and planar binary trees, in mathematical physics has a long tradition. The canonical examples are perhaps Feynman diagrams but the connection with Hopf Algebras of trees started with the works of Connes and Kreimer \cite{MR1725011,MR1748177,MR1810779} that describe the combinatorics of the procedure of extracting sub-divergences in Quantum Field Theory known as the BPHZ renormalization procedure \cite{collins1984renormalization}. Another approach to the use of graphs in QFT and in particular in QED, considering binary trees, planar or not, was followed by Brouder and Frabetti \cite{Brouder:1999gk,Brouder:1999za,MR1817703}. Later it was understood that these two approaches are very similar and in some cases equivalent and are related to quasi-symmetric and noncommutative quasi-symmetric functions, see for instance \cite{MR2194965,MR1905177,MR1909461,MR1327096}.

 In this paper we show how the Hopf Algebra of planar binary trees of Loday and Ronco \cite{MR1654173} can be seen as a representation of the vector space generated by correlation functions that obey the Eynard-Orantin recursion formula. These correlation functions are graded by the Euler characteristic and we can consider for each degree the vector space over $\Rational$ generated by them and then take the direct sum of these vector spaces for all degrees. 
 First we consider planar binary trees of order $n$, that is with $n$ vertices and $n+1$ leaves, as a representation of genus $g=0$ correlation function $W_{k}^0(p,p_1,\dots, p_{k-1})$ of Euler characteristic $\chi=2-2g-k$ equal to $-n$. Here the Euler characteristic is the one of Riemann or topological surfaces of genus $g$ and $k$ punctures or borders to which the correlation functions $W_k^g(p,p_1,\dots, p_{k-1})$ are usually related in some concrete problems. For $g=0$ we label the root with $p$ and the $n+1$ leaves with the $p_1,\dots,p_{n+1}$ variables. Then by connecting the nearest neighbors leaves with a single edge and reducing the number of pairs of labels in the same way as increasing the genus we obtain graphs with loops that we see as a representation of higher genus correlation functions with the same Euler characteristic. 
 
 As an example take $W^2_1(p)$ which has $\chi=-3$. Its underline generating trees are planar binary trees of order 3 which are also models for $W_5^0(p,p_1,p_2,p_3,p_4)$:
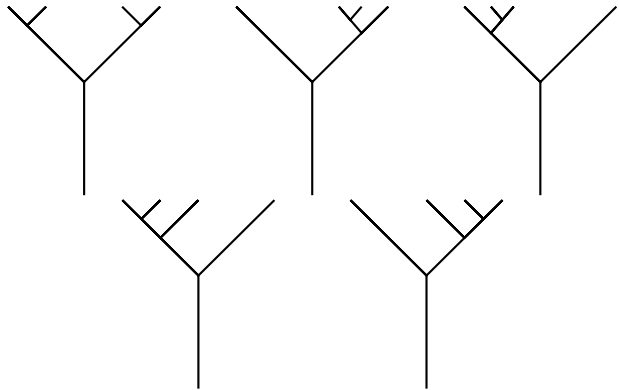
\begin{figure}
  \begin{tikzpicture}\label{fig:pbt5a}
  \draw[thick](1,-0.5) -- (1,1) -- (0.25,1.75) -- (0.5,2)  -- (0.25,1.75) -- (0,2) -- (1,1) -- (2,2) -- (1.75,1.75) -- (1.5,2);\draw[thick] (4,-0.5) -- (4,1) -- (3,2) -- (4,1) -- (5,2) -- (4.65,1.65) -- (4.35,2)-- (4.5,1.825)--(4.65,2);\draw[thick] (7,-0.5) -- (7,1) -- (6,2) -- (6.35,1.65)--(6.65,2)--(6.5,1.825)-- (6.35,2)--(6.5,1.825)--(6.35,1.65)-- (7,1) -- (8,2);
  \end{tikzpicture}
  \begin{tikzpicture}\label{fig:pbt5b}
  \draw[thick] (10,-0.5) -- (10,1) -- (9,2) -- (9.25,1.75)--(9.5,2)--(9.25,1.75) --  (9.5,1.5) -- (10,2) -- (9.5,1.5) -- (10,1) -- (11,2);
  \draw[thick] (13,-0.5) -- (13,1) -- (12,2) -- (13,1) -- (14,2)--(13.75,1.75)--(13.5,2)--(13.75,1.75) --  (13.5,1.5) -- (13,2) -- (13.5,1.5) ;
  \end{tikzpicture}
  \caption{Planar binary trees of order 3 as generators of the correlation functions $W^0_5, W^1_3$ and $W^2_1$ with $\chi=-3$.}\label{fig:pbt5}
\end{figure}
 Identifying pairs of nearest neighbor leaves in the left and right branches independently we get the second term of topological recursion. Identifying pairs of leaves each taken from the left and the right branches gives the first term. Note that in this case not every planar binary tree of order 3 gives $W_1^2$. In fact the first tree of fig. \ref{fig:pbt5} does not give a genus 2 correlation function by identifying the nearest neighbor leaves in opposite branches.
 \section{The topological recursion of Eynard and Orantin}
 The topological recursion formula of Eynard and Orantin has its origin in Matrix Models, for general reviews see for instance \cite{DiFrancesco:1993nw,MR2346575}. In the hermitian 1-matrix form of the theory the purpose is to compute connected correlation functions $W_{k+1}$ depending on a set of variables $p, p_1,\dots,p_k$
 \begin{equation}
 W_{k+1}(p,p_1,\dots p_k)=\Braket{\text{Tr}\frac{1}{p-M}\text{Tr}\frac{1}{p_1-M}\dots \text{Tr}\frac{1}{p_k-M}}_c
 \end{equation}
 starting with $W_1(p)$ and $W_2(p,p_1)$.
  These functions which are solutions of the so-called loop equations are only well defined over Riemann surfaces because in $\Complex$ they are multi-valued. They admit an expansion on the order $N$ of the random matrix $M$, with components $W_{k+1}^g(p,p_1,\dots p_k)$ related to a definite genus. We will not be concerned here with the actual computation of correlation functions in specific models.
  
  Let $K=(p_1,\dots,p_k)$ be a vector of variables. For instance in concrete cases these can be coordinates of punctures on Riemann surfaces, labels of borders on topological surfaces or variables in Matrix Models, but we just leave them as labels of leaves of planar binary trees or of graphs obtained from planar binary trees. We assign the label $p$ to the root of a tree or of a graph with loops obtained from a tree. The topological recursion formula is
 \begin{align}\label{toprec}
 &W_{k+1}^g(p,K)=\sum_{\text{branch points }\alpha}\text{Res}_{p\rightarrow \alpha}K_p(q,\bar{q})\notag\\
 &\left(W^{g-1}_{k+2}(q,\bar{q},K)+\sum_{L\cup M=K,h=0}^g W^h_{|L|+1}(q,L)W^{g-h}_{|M|+1}(\bar{q},M)\right)
 \end{align}
 where the sum is restricted to terms with Euler characteristic equal or smaller than 0. For instance if $h=0$ then $|L|\ge 1$. 
 For a very clear exposition about this setup from the point of view of Algebraic Geometry see for instance \cite{MR3087960} but some comments are in order. The branch points are the ones from a meromorphic function $x$ defined on a so called spectral curve $\mathcal{E}(x,y)=0$. The recursion kernel $K_p(q,\bar{q})$ is, roughly speaking, a meromorphic (1,1) tensor that depends on a regular point $p$ in the neighbourhood of a branch point and on $q$ and its conjugated point $\bar{q}$ for which $x(q)=x(\bar{q})$ and $y(q)=-y(\bar{q})$. In fact it can be computed from $W_1^0(p)$ and $W_2^0(p,p)$ which are symmetric differentials of order one and two respectively. Actually, all $W_k^g$ are meromorphic symmetric differentials but we will continue to refer to them as correlation functions. Since our approach will be purely algebraic and in order to soften the notation we will not explicitly mention the sum of the residues over the branch points when referring to this formula.
 \section{The Loday-Ronco Hopf Algebra of planar binary trees}
 We collect here some important facts of the Loday-Ronco Hopf algebra. Details and proofs can be found in \cite{MR2194965,MR1654173}.
 Let $S_n$ be the symmetric group of order $n$ with the usual product $\rho\cdot\sigma$ given by the composition of permutations. When necessary we denote a permutation $\rho$ by its image $(\rho(1)\rho(2)\dots\rho(n))$. Recall that a shuffle $\rho(p,q)$ of type $(p,q)$ in $S_n$ is a permutation such that $\rho(1)<\rho(2)<\dots <\rho(p)$ and $\rho(p+1)<\rho(p+2)<\dots <\rho(p+q)$. For instance the shuffles of type $(1,2)$ in $S_3$ are $(123),(213)$ and $(312)$. We denote the set  of $(p,q)$ shuffles by $S(p,q)$.
 Take 
 \begin{equation}
 k[S^\infty]=\oplus_{n=0}^{\infty}k[S_n]
 \end{equation}
 with $S_0$ identified with the empty permutation. $k[S^\infty]$ is a vector space over a field $k$ of characteristic $0$ generated by linear combinations of permutations. It is graded by the order of permutations and $k[S_0]$ which contains the empty permutation is identified with the field $k$. For two permutations $\rho\in S_p$ and $\sigma\in S_q$ there is a natural product on $S^\infty$ denoted by $\rho\times\sigma$ which is a permutation on $S_{p+q}$ given by letting $\rho$ acting on the first $p$ variables and $\sigma$ acting on the last $q$ variables.

 There is a unique decomposition of any permutation $\sigma\in S_n$ in two permutations $\sigma_i\in S_i$ and $\sigma'_{n-i}\in S_{n-i}$ for each $i$ such that
 \begin{equation}
 \sigma =(\sigma_i\times\sigma'_{n-i})\cdot w^{-1}
 \end{equation}
 where $w$ is a shuffle of type $(i,n-i)$.
 With the $\ast$ product
 \begin{equation}
 \rho\ast \sigma=\sum_{\alpha_{n,m}\in S_{(n,m)}}\alpha_{n,m}\cdot\left(\rho\times\sigma\right)
 \end{equation}
 and the co-product
 \begin{equation}\label{eq:coproductperm}
  \Delta\sigma=\sum \sigma_{i}\otimes\sigma^{'}_{n-i}
  \end{equation}
  $k[S^{\infty}]$ becomes a bi-algebra and since it is graded and connected it is automatically a Hopf Algebra.

 A planar binary tree is a graph with no loops embedded in the plane with only trivalent vertices. In every planar binary tree there are paths that start on a special edge called the root and end on the terminal edges called leaves. The leaves can be left or right oriented. The order $|t|$ of a planar binary tree $t$ is the number of its vertices and on each planar binary tree of order $n$ there are $n+1$ leaves that usually are numbered from 0 to $n$ from left to right. It is frequent to visualize planar binary trees from the bottom to the top, with the root as its lowest vertical edge and the leaves as the highest edges, oriented SW-NE or SE-NW. We will denote the set of planar binary trees of order $n$ by $Y^n$ and by $k[Y^\infty]$ the vector space over $k$ generated by planar binary trees of all orders.
\begin{figure}
   \begin{tikzpicture}
  \draw[thick] (1,-0.5) -- (1,1) -- (0,2) -- (0.35,1.65)--(0.65,2)--(0.35,1.65)-- (1,1) --(1.65,1.65)--(1.35,2)--(1.65,1.65) -- (2,2);
     \end{tikzpicture}
  \caption{A planar binary tree of order 3}\label{fig:pbt3-2}
\end{figure}
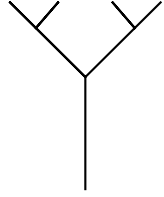
 Additionally a planar binary tree with levels is a planar binary tree such that on each horizontal line there is at most one vertex.
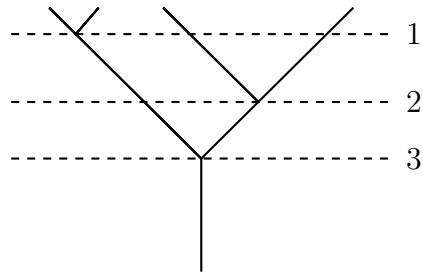
\begin{figure}
   \begin{tikzpicture}
     \draw[thick] (4,-0.5) -- (4,1) -- (2,3) -- (2.35,2.65)--(2.65,3)--(2.35,2.65)-- (4,1) --(4.75,1.75)--(3.5,3)-- (4.75,1.75)--(6,3); \draw[thick,dashed](1.5,1.75)--(6.5,1.75);\draw[thick,dashed](1.5,2.65)--(6.5,2.65);
     \draw (6.8,2.65) node{1}; \draw (6.8,1.75) node{2};\draw[thick,dashed](1.5,1)--(6.5,1);\draw (6.8,1) node{3};
    \end{tikzpicture}
  \caption{Planar binary tree with levels that is the image of $\mathbf{(132)}$}\label{fig:pbtlev3}
\end{figure}
 It is clear that reading the vertices from left to right and from top to bottom it is possible to assign a permutation of order $n$ to a planar binary tree with levels and that this assignment is unique. For example in fig. \ref{fig:pbtlev3} the tree corresponds to the permutation $(132)$. In this way it is completely equivalent to consider the Hopf algebra $k[S^\infty]$ or the Hopf algebra of planar binary trees with levels because they are isomorphic. However Loday and Ronco show in \cite{MR1654173} that the $\ast$ product and the co-product are internal on the algebra of planar binary trees which is then isomorphic to a sub-Hopf algebra of $k[S^\infty]$ with the same product and co-product. The identity of the Hopf Algebra $k[Y^\infty]$ is the tree with a single edge and no vertices, following the convention of considering only internal vertices, which represents the empty permutation, and the trivial permutation of $S_1$ is represented by the tree with one vertex and two leaves, see fig \ref{fig:idgen}. In fact this element is the generator of the augmented algebra by the $\ast$ product. See fig. (\ref{fig:1star1star1}) for an example of an order 3 product.
 
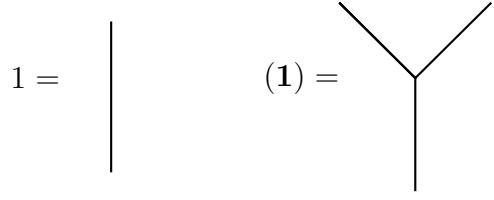
\begin{figure}
  \begin{tikzpicture}
  \draw (0,1) node{ $\Large{1=}$};\draw[thick] (1,-0.25) -- (1,1.75);
  \draw (3.5,1) node{ $\Large{(\mathbf{1})=}$};\draw[thick] (5,-0.5) -- (5,1) -- (4,2) -- (5,1) -- (6,2) ;
  \end{tikzpicture}
  \caption{The identity and the generator in $k[Y^\infty]$}\label{fig:idgen}
\end{figure}
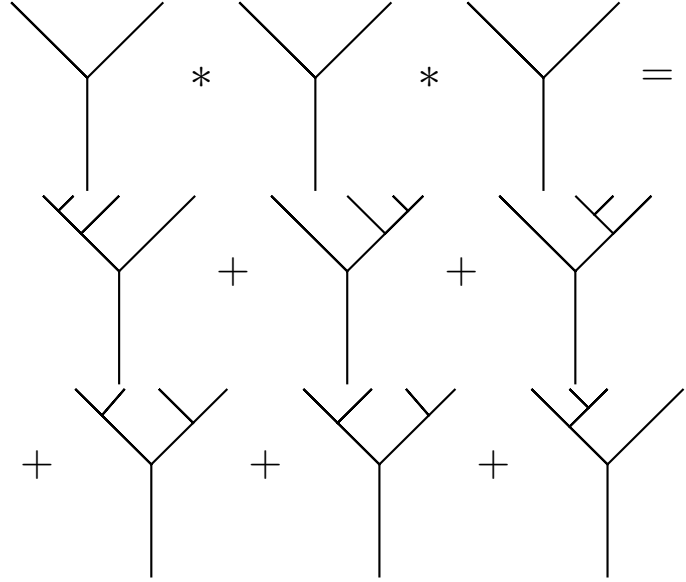
\begin{figure}
  \begin{tikzpicture}
   \draw[thick](1,-0.5) -- (1,1) -- (0,2) -- (1,1) -- (2,2) ; \draw (2.5,1) node{\textbf{{\Large $\ast$}}}; \draw[thick](4,-0.5) -- (4,1) -- (3,2) -- (4,1) -- (5,2); \draw (5.5,1) node{\textbf{{\Large $\ast$}}};\draw[thick](7,-0.5) -- (7,1) -- (6,2) -- (7,1) -- (8,2);\draw (8.5,1) node{\textbf{{\Large $=$}}};
   \end{tikzpicture}
   \begin{tikzpicture}
   \draw[thick] (1,-0.5) -- (1,1) -- (0,2) -- (0.2,1.8)--(0.4,2)--(0.2,1.8)-- (0.5,1.5)--(1,2)--(0.5,1.5)-- (1,1) -- (2,2);\draw (2.5,1) node{\textbf{{\Large $+$}}}; \draw[thick] (4,-0.5) -- (4,1) -- (3,2) -- (4,1) -- (5,2)--(4.8,1.8)--(4.6,2)--(4.8,1.8)--(4.5,1.5)--(4,2);\draw (5.5,1) node{\textbf{{\Large $+$}}}; \draw[thick] (7,-0.5) -- (7,1) -- (6,2) -- (7,1) -- (8,2)--(7.5,1.5)--(7.25,1.75)--(7.5,2)--(7.25,1.75)--(7,2);
    \end{tikzpicture}
    \begin{tikzpicture}
       \draw (-0.5,1) node{\textbf{{\Large $+$}}};\draw[thick] (1,-0.5) -- (1,1) -- (0,2) -- (0.35,1.65)--(0.65,2)--(0.35,1.65)-- (1,1) --(1.55,1.55)--(1.10,2)--(1.55,1.55) -- (2,2);\draw (2.5,1) node{\textbf{{\Large $+$}}};  \draw[thick] (4,-0.5) -- (4,1) -- (3,2) -- (3.45,1.55)--(3.9,2)--(3.45,1.55)-- (4,1) --(4.65,1.65)--(4.35,2)--(4.65,1.65) -- (5,2);\draw (5.5,1) node{\textbf{{\Large $+$}}}; \draw[thick] (7,-0.5) -- (7,1) -- (6,2) -- (6.5,1.5)--(6.75,1.75)--(6.5,2)--(6.75,1.75)--(7,2)-- (6.5,1.5) -- (7,1)-- (8,2);
       \end{tikzpicture} 
\caption{$\mathbf{(1)}\ast\mathbf{(1)}\ast\mathbf{(1)}=\mathbf{(123)}+\mathbf{(321)}+\mathbf{(312)}+\mathbf{(132)}+\mathbf{(231)}+\mathbf{(213)}$ computed in $k[S^\infty]$. Note that in $k[Y^\infty]$ the fourth and the fifth trees are the same.} \label{fig:1star1star1}
 \end{figure}
 The grafting $t_1 \vee t_2$ of two trees $t_1$ and $t_2$ is the operation of producing a new tree $t$ by inserting $t_1$ on the left and $t_2$ on the right leaves of $\mathbf{(1)}$. It is clear that any tree of order $n$ can be written as $t_1\vee t_2$ with $t_1$ of order $p$, $t_2$ of order $q$ and $n=p+q+1$. If a tree has only leaves on the right branch besides the first leaf then it can be written as $|\vee t_2$ and reciprocally if it has only leaves on the left branch besides the last leaf. Note that in particular $\mathbf{(1)}=|\vee |$. 
 In \cite{MR1654173} Loday and Ronco show that the $\ast$ product restricted to planar binary trees satisfies the identity
\begin{equation}\label{eq:shuffleident1}
    t\ast t'=t_1\vee (t_2\ast t')+(t \ast t_1^{'})\vee t_2^{'}
\end{equation}
and
\begin{equation}\label{eq:shuffleident2}
    t\ast |=|\ast t=t
\end{equation}
with $t=t_1\vee t_2$ and $t'=t_1^{'}\vee t_2^{'}$.

 If $t_1$ is a tree of order $p$ and a representative element of $W_{p+2}^{0}(p,L)$ and $t_2$ is a tree of order $q$ and a representative element of $W_{q+2}^{0}(p,M)$ then $t=t_1\vee t_2$ is a tree of order $n=p+q+1$ and a representative element of $$W^0_{n+2}(p,K)=K_p(q,\bar{q})W_{p+2}^{0}(q,L)W_{q+2}^{0}(\bar{q},M).$$ with $K=\{p_1,\dots,p_{n+1}\}=L\cup M$.
 We will clarify this in what follows.
 
  \section{The solution of topological recursion}
  \subsection{Genus 0}
  A representation map $\psi$ from the vector space of correlation functions of genus $g$ to the vector space of graphs with loops should be defined such that in particular to a correlation function $W^0_{n+2}(p,p_1,\dots,p_{n+1})$ of Euler characteristic $\chi=-n$ would correspond the trees of order $n$. In fact we will state below that the representation of $W^0_{n+2}(p,p_1,\dots,p_{n+1})$ is the sum of all trees of order $n$. It is not clear that this map gives a true representation in the strict mathematical sense. It is linear by definition and is obvious that it is surjective, as we can associate some instance of a correlation function $W^0_{n+2}(p,p_1,\dots,p_{n+1})$ to any tree of order $n$. If it is injective and a homomorphism is a more delicate issue because even if one considers $W^0_{n+2}$ as being a sum of all instances of correlation functions of Euler characteristic $-n$ each represented by a tree $t\in Y^n$ in the same way as in Particle Physics, where different Green functions contribute to the same scattering amplitude, it is not evident that the space of correlation functions has a product with an identity that would correspond to the trivial tree |. Note that this would give at least a ring structure and in the case of topological quantum field theory where correlations functions are identified with topological surfaces with punctures cobordism is a good candidate for such a product. In fact it is a consequence of the axioms of topological quantum field theory as stated by Atiyah for example in \cite{atiyah1988topological} that the cylinder $\Sigma\times I$, where $\Sigma$ is a topological surface without border and $I$ is a interval of real numbers, may be identified with the identity map between two vector spaces. In any case we will not elaborate more on this here and use the word representation in a somewhat rough sense. In particular, when referring to the inverse image of a tree or a sum of trees we will refrain of using the inverse $\psi^{-1}$ but will use instead $\psi^\ast$ as for the pullback.
  \begin{defn}\label{def:W3}
  Consider the planar binary tree with one vertex $(\mathbf{1})$. The 3-point correlation function $W_3^0(p,p_1,p_2)$ is represented by the sum of two planar binary trees with one vertex, obtained by the permutation of the leaf labels $p_1$ and $p_2$:
  \begin{equation}
  \psi\left(W_3^0(p,p_1,p_2)\right)=\sum_{\text{perm. of leaf labels $\{p_1,p_2\}$}}(\mathbf{1})
  \end{equation}
 \end{defn}
 
  The trees that represent $W_3^0(p,p_1,p_2)$ are given by the permutations of the leaf labels of $|\vee |$. Then it is natural to represent the operation of grafting two trees by the insertion of the recursion kernel $K_p(q,\bar{q})$ on its roots. Therefore the symbol
 $\tree $ has two meanings. When isolated it represents $W_3^0(p,p_1,p_2)$ because the two cylinders $W^0_2(q,p_1)$ and $W^0_2(\bar{q},p_2)$ are implicitly identified with its leaves.
 When it is an internal vertex of a more complex tree it is the recursion kernel $K_p(q,\bar{q})$ with suitable labels of its variables.
   \begin{defnbis}{def:W3}\label{def:W3b}
      The propagator or cylinder (also named Bergman kernel in the literature) $W^0_2(q,\bar{q})$ is represented through $\psi$ by the empty permutation $|$ and the recursion kernel is represented through $\psi$ by $\tree$ when in an internal vertex of some tree. Then each planar binary tree of order n is a representation of an instance of some correlation function in genus 0 with each vertex identified with a recursion kernel and each left leaf identified with the cylinder $W^0_2(q_i,p_j)$ or each right leaf identified with the cylinder $W^0_2(\bar{q}_i,p_k)$. Finally the image under $\psi$ of a correlation function $W_{n+2}^0(p,p_1,\dots,p_{n+1})$ with $\chi=-n$ is the sum of all planar binary trees of order $n$ considering all permutations of their leaf labels and with the identifications mentioned above,
      \begin{equation}
      \psi\left(W_{n+2}^0(p,p_1,\dots,p_{n+1})\right)=\sum_{\substack{t_i\in Y^n \\ \text{perm. of leaf labels $\{p_1,\dots,p_{n+1}\}$}}} t_i.
      \end{equation}
    \end{defnbis}
    Hence Definition \ref{def:W3} becomes the following example:
    \begin{defn2}\label{ex:W3}
     Consider the planar binary tree with one vertex. The 3-point correlation function $W_3^0(p,p_1,p_2)$ is represented by the sum of two planar binary trees with one vertex, obtained by the permutation of the leaf labels $p_1$ and $p_2$.
     \end{defn2}
    \begin{align}
     \psi\left(W_3^0(p,p_1,p_2)\right)&=\psi\left(K_p(q,\bar{q})W^0_2(q,p_1)W^0_2(\bar{q},p_2)\right)+ \text{ perm. of $\{p_1,p_2\}$}\notag\\
     &=\sum_{\text{ perm. of $\{p_1,p_2\}$}}|\vee |\notag\\
      &=\sum_{\text{ perm. of $\{p_1,p_2\}$}}(\mathbf{1})
    \end{align}
     
   \begin{prop}
    If $W_{n+2}^0(p,p_1,\dots,p_{n+1})$ is a correlation function with Euler characteristic $\chi=-n$ that is a solution of (\ref{toprec}) then we have
    \begin{align}\label{eq:defcorr}
    \psi \left(W_{n+2}^0(p,p_1,\dots,p_{n+1})\right)&=\sum_{\substack{p+q+1=n\\|t_1|=p, |t_2|=q}} t_1\vee t_2\notag\\
    & + \text{perm. of leaf labels $\{p_1,\dots,p_{n+1}\}$}
    \end{align}
    \end{prop}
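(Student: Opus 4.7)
The plan is to proceed by strong induction on $n$, using the topological recursion formula (\ref{toprec}). The base cases $n=0$ and $n=1$ are already covered by Definition \ref{def:W3b} and Example \ref{ex:W3}: the propagator $W_2^0$ is represented by $|$, and $W_3^0$ by $(\mathbf{1}) = |\vee|$ with permutations of $\{p_1,p_2\}$, which matches the statement since the only decomposition $t_1\vee t_2$ with $|t_1|+|t_2|=0$ is $|\vee|$.

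For the inductive step, specialize (\ref{toprec}) to $g=0$: the $W^{g-1}$ term disappears, and the condition $\chi\le 0$ on each factor forces $|L|,|M|\ge 1$ in the remaining sum, so
\begin{equation*}
W_{n+2}^0(p,K) = K_p(q,\bar{q}) \sum_{\substack{L\cup M = K \\ |L|,|M|\ge 1}} W_{|L|+1}^0(q,L)\, W_{|M|+1}^0(\bar{q},M).
\end{equation*}
Since $|K|=n+1$, setting $p=|L|-1$ and $q=|M|-1$ yields $p+q=n-1$ and each factor has Euler characteristic $-p$, $-q$ respectively, falling under the induction hypothesis. Applying $\psi$ and using the convention of Definition \ref{def:W3b}, the kernel $K_p(q,\bar{q})$ becomes an internal vertex $\tree$ at the root, which is exactly the grafting operation. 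Each surviving term is therefore the sum of all $t_1\vee t_2$ with $|t_1|=p$ and $|t_2|=q$, together with permutations of leaf labels coming from the inductive step applied to each factor.

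It remains to verify the combinatorial reassembly. Every planar binary tree $t\in Y^n$ has a unique decomposition $t=t_1\vee t_2$ at its root vertex, so summing over pairs $(t_1,t_2)$ with $|t_1|+|t_2|=n-1$ enumerates $Y^n$ without repetition; and a permutation of $\{p_1,\dots,p_{n+1}\}$ over the $n+1$ leaves of $t_1\vee t_2$ factors uniquely as a choice of which labels go into the left subtree (i.e.\ the ordered partition $L\cup M=K$) together with independent permutations on each side. The main obstacle is keeping track of the auxiliary residue variables $q,\bar{q}$ that appear on the innermost leaves of $t_1$ and $t_2$ as $W_2^0$-decorations: upon grafting they must be absorbed into the internal edges of $t_1\vee t_2$ without losing or double-counting any cylinder factor. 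Once this label bookkeeping is verified and combined with the identification of the kernel with grafting, the two sides of (\ref{eq:defcorr}) agree term by term.
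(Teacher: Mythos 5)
Your proof is correct and follows essentially the same route as the paper's: both hinge on the unique decomposition $t=t_1\vee t_2$ of a tree at its root vertex, the identification of grafting with insertion of the recursion kernel $K_p(q,\bar{q})$, and the factorization of leaf-label permutations into a choice of $L\cup M=K$ plus permutations on each side. The only (cosmetic) difference is direction — you run the induction from the $g=0$ recursion toward the tree sum, while the paper starts from Definition \ref{def:W3b} ($\psi(W^0_{n+2})=\sum_{t\in Y^n}t$) and recovers the recursion by applying $\psi^\ast$ — and your acknowledged hand-waving over the $q,\bar q$ bookkeeping is at the same level of informality as the paper's own argument.
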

 \begin{proof}
 This is the topological recursion in genus $0$ written with planar binary trees. For $n=1$ this is the example \ref{ex:W3}. For $n$ arbitrary by Definition \ref{def:W3b} 
 \begin{equation}
       \psi\left(W_{n+2}^0(p,p_1,\dots,p_{n+1})\right)=\sum_{\substack{t\in Y^n \\ \text{perm. of leaf labels $\{p_1,\dots,p_{n+1}\}$}}} t
       \end{equation}
Decompose uniquely any $t$ of order $n$ into $t=t_1\vee t_2$ of orders $|t_1|=p$ and $|t_2|=q$ with $p+q+1=n$ to get
 \begin{equation}\label{eq:invimage}
       \psi\left(W_{n+2}^0(p,p_1,\dots,p_{n+1})\right)=\sum_{\substack{t_1\in Y^p, t_2\in Y^q, p+q+1=n \\ \text{perm. of leaf labels $\{p_1,\dots,p_{n+1}\}$}}} t_1\vee t_2.
       \end{equation}
Then $t_1$ and $t_2$ are on the image by $\psi$ of $W_{p+2}$ and $W_{q+2}$ for $p$ and $q$ varying from 0 to $n-1$ and constrained by $p+q+1=n$. Since the operation of grafting two trees is represented by attaching the recursion kernel to its roots then, summing for all $t_1\in Y^p, t_2\in Y^q$ and for $p+q+1=n$, we get the topological recursion formula for $g=0$ after taking the preimage of (\ref{eq:invimage}) by $\psi$:
 \begin{align}
 W_{n+2}^0(p,p_1,\dots,p_{n+1})&=\notag\\
 K_p(q,\bar{q})&\sum_{\substack{L\cup M=\{p_1,\dots,p_{n+1}\},\\|L|=p+1,|M|=q+1}} W^0_{|L|+1}(q,L)W^{0}_{|M|+1}(\bar{q},M).
 \end{align} 
 \end{proof}
 \begin{rem}
Note that by $W^0_{n+2}(p,K)$ with $|K|=n+1$ we understand all instances of the correlation function with $g=0$ and $n+2$ labels. This is similar to the situation in High Energy Physics where for the same physical process described by a scattering amplitude there are several Feynman diagrams that contribute.

 It is well known that the dimension of the vector space generated by planar binary trees of order $n$ is given by the Catalan number (see for instance \cite{MR1817703}) $$c_n=\frac{2n!}{n!(n+1)n!}.$$ It is also known that correlation functions in Matrix Models have a large $N$ or planar expansion that is given in terms of Catalan numbers. Therefore it is of no surprise that there exists a correspondence between planar binary trees and correlation functions of genus 0.
 \end{rem}

 \begin{thm}
The $n$-order solution $W_{n+2}^0(p_1,\dots,p_{n+1})$ of the topological recursion in genus 0 is represented by the linear combination
$$\sum t=\mathbf{(1)}\ast\mathbf{(1)}\ast\dots\ast\mathbf{(1)}$$
with $n$ factors of $\mathbf{(1)}$ followed by the sum over all permutations of its labels. 
\end{thm}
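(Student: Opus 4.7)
By the preceding proposition, $\psi(W_{n+2}^0(p,p_1,\dots,p_{n+1}))$ equals $\sum_{t\in Y^n}t$, each tree counted once, further summed over all permutations of the leaf labels. The theorem therefore reduces to the purely algebraic identity
$$
\mathbf{(1)}^{\ast n}=\sum_{t\in Y^n}t
$$
in $k[Y^\infty]$, with each tree appearing with coefficient one. I would prove this identity by strong induction on $n$.

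The base cases $n=0$ (giving $\mathbf{(1)}^{\ast 0}=|$) and $n=1$ (giving $\mathbf{(1)}^{\ast 1}=\mathbf{(1)}=|\vee|$) follow directly from (\ref{eq:shuffleident2}). Writing $g_n=\mathbf{(1)}^{\ast n}$ and $f_n=\sum_{t\in Y^n}t$, the strategy is to show that both sequences satisfy the same grafting recursion
$$
h_n=\sum_{p+q+1=n}h_p\vee h_q,
$$
with $h_0=|$. For $f_n$ this is immediate: every non-trivial $t\in Y^n$ decomposes uniquely as $t=t_1\vee t_2$ with $|t_1|=p$, $|t_2|=q$, and $p+q+1=n$. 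The non-trivial content is to derive the same recursion for $g_n$.

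For the inductive step I would start from $g_n=\mathbf{(1)}\ast g_{n-1}$ and apply the shuffle identity (\ref{eq:shuffleident1}) with $t=\mathbf{(1)}=|\vee|$ and $t'=s=s_1\vee s_2$ for each $s\in Y^{n-1}$, which specializes to
$$
\mathbf{(1)}\ast s=|\vee s+(\mathbf{(1)}\ast s_1)\vee s_2.
$$
Summing over $s\in Y^{n-1}$ and invoking the inductive hypothesis $g_{n-1}=f_{n-1}$, the bilinearity of $\vee$, the lower-order inductive hypotheses $g_p=f_p$ for $p<n-1$, and the identity $\mathbf{(1)}\ast g_p=g_{p+1}$, I would obtain
$$
g_n=|\vee g_{n-1}+\sum_{p+q+1=n-1}g_{p+1}\vee g_q=\sum_{p+q+1=n}g_p\vee g_q
$$
after the reindexing $p\mapsto p+1$, with the first summand absorbed as the $p=0$ term. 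This is the same grafting recursion as for $f_n$, so $g_n=f_n$ by induction.

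The main delicate point is the combinatorial bookkeeping in the inductive step: one must carefully separate the $|\vee s$ summand from the remaining summands produced by the shuffle identity, reindex so that this contribution coincides with the $p=0$ term of the unified sum, and apply the inductive hypothesis at the correct orders. There is no deep obstruction; the argument rests on the fact, due to Loday and Ronco, that (\ref{eq:shuffleident1}) and (\ref{eq:shuffleident2}) make the $\ast$ product internal to $k[Y^\infty]$ and already fully encode the grafting structure required.
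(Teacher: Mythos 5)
Your proof is correct and follows essentially the same route as the paper: both arguments induct on $n$, apply the Loday--Ronco shuffle identity (\ref{eq:shuffleident1}) to $\mathbf{(1)}\ast t$ with $t=t_1\vee t_2$, and identify the $|\vee t$ summand with the $p=0$ term of the grafting decomposition. The only difference is presentational --- you cleanly isolate the purely combinatorial identity $\mathbf{(1)}^{\ast n}=\sum_{t\in Y^n}t$ and defer the translation to correlation functions to the preceding proposition, whereas the paper carries the map $\psi^\ast$ through the inductive step.
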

In this way $W_{n+2}^0(p,p_1,\dots,p_{n+1})$ is represented by $\sum t$ followed by the identification of cylinders $W^0_2(q_i,p_j)$ with the left leaves or $W^0_2(\bar{q}_i,p_k)$ with the right leaves and finally by summing over all permutations of the labels $p_1,p_2,\dots, p_{n+1}$. In other words, the $\ast$ product $\mathbf{(1)}\ast\mathbf{(1)}\ast\dots\ast\mathbf{(1)}$ gives all possible insertions of recursion kernels of $W_{n+2}^0$.
\begin{proof}
By induction on the Euler characteristic or equivalently on the order $n$. For $n=1$ we saw that $W_3^0(p,p_1,p_2)$ is just a sum of two planar binary trees with one vertex, with the leaves in correspondence with $W^0_2(q,p_1)$ and $W^0_2(\bar{q},p_2)$ or its permutations and the root labeled by $p$. 
So we start the induction at $n=2$: we want to show that $(\mathbf{1})\ast(\mathbf{1})$ represents $W_4^0(p,p_1,p_2,p_3)$. In the Loday-Ronco Hopf Algebra we have that $\mathbf{(1)}\ast\mathbf{(1)}=\mathbf{(12)}+\mathbf{(21)}$.
 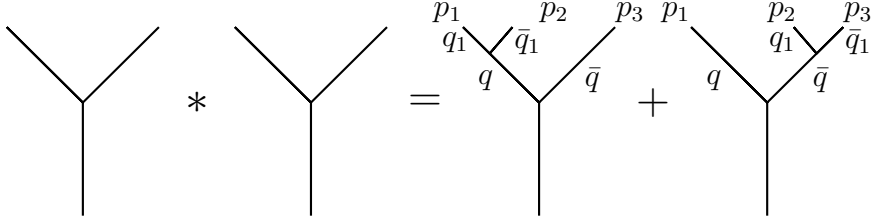
\begin{figure}
 \begin{tikzpicture}
  \draw[thick](1,-0.5) -- (1,1) -- (0,2) -- (1,1) -- (2,2) ; \draw (2.5,1) node{\textbf{{\Large $\ast$}}}; \draw[thick](4,-0.5) -- (4,1) -- (3,2) -- (4,1) -- (5,2);\draw (5.5,1) node{\textbf{{\Large $=$}}};\draw[thick] (7,-0.5) -- (7,1) -- (6,2) -- (6.35,1.65)--(6.65,2)--(6.35,1.65)-- (7,1) -- (8,2);\draw (8.5,1) node{\textbf{{\Large $+$}}}; \draw[thick] (10,-0.5) -- (10,1) -- (9,2) -- (10,1) -- (11,2)--(10.65,1.65)--(10.35,2)--(10.65,1.65);\draw (6.3,1.3) node{ $q$};\draw (7.7,1.3) node{ $\bar{q}$};\draw (9.3,1.3) node{ $q$};\draw (10.7,1.3) node{ $\bar{q}$};
  \draw (5.9,1.8) node{ $q_1$};\draw (6.85,1.8) node{ $\bar{q}_1$};\draw (5.8,2.2) node{ $p_1$}; \draw (7.2,2.2) node{ $p_2$};\draw (8.2,2.2) node{ $p_3$};\draw (10.2,1.8) node{ $q_1$};\draw (11.2,1.8) node{ $\bar{q}_1$};\draw (8.8,2.2) node{ $p_1$}; \draw (10.2,2.2) node{ $p_2$};\draw (11.2,2.2) node{ $p_3$};
  \end{tikzpicture}
 \caption{$W_4^0(p,p_1,p_2,p_3)$}\label{fig:W40}
 \end{figure}
 On the other hand the topological recursion formula gives for $W_4^0$
 \begin{align}
 W_4^0(p,p_1,p_2,p_3)&=K_p(q,\bar{q})\left(W_3^0(q,p_1,p_2)W_2^0(\bar{q},p_3)\right.\notag\\
 &\left.+W_2^0(q,p_1)W_3^0(\bar{q},p_2,p_3)+\text{perm. of }\{p_1,p_2,p_3\}\right)\notag\\
 &=K_p(q,\bar{q})K_q(q_1,\bar{q}_1)W_2^0(q_1,p_1)W_2^0(\bar{q}_1,p_2)W_2^0(\bar{q},p_3)\notag\\
 &+K_p(q,\bar{q})K_{\bar{q}}(q_1,\bar{q}_1)W_2^0(q_1,p_2)W_2^0(\bar{q}_1,p_3)W_2^0(q,p_1)\notag\\
 &+\text{perm. of }\{p_1,p_2,p_3\}
 \end{align}
 which gives the two terms from the $\ast$ product of the Loday-Ronco Hopf Algebra identifying the vertices with the recursion kernel and the leaves with the cylinders (see fig. \ref{fig:W40}).

 Next assume the induction hypothesis for $n-1$ and note that if $W^0_{n+1}$ is represented by the linear combination $\sum t=\mathbf{(1)}\ast\mathbf{(1)}\ast\dots\ast\mathbf{(1)}$ of trees $t\in Y^{n-1}$ then each tree can be written uniquely as $t=t_1\vee t_2$ with $|t_1|=a, |t_2|=b$ and $a+b+1=n-1$. Using (\ref{eq:shuffleident1}) and (\ref{eq:shuffleident2}) we get
 \begin{align}\label{eq:proof}
   \sum_{t\in Y^{n-1}} \mathbf{(1)}\ast  t &= \sum_{t\in Y^{n-1}}|\vee (|\ast t) + \sum_{\substack{t_1\in Y^{a},t_2\in Y^b\\ a+b+1=n-1}}(\mathbf{(1)}\ast t_1)\vee t_2\notag\\
    &=\sum_{t\in Y^{n-1}}|\vee t + \sum_{\substack{t_1\in Y^{a},t_2\in Y^b\\ a+b+1=n-1}}(\mathbf{(1)}\ast t_1)\vee t_2
 \end{align}
 Because each $t$ is of order $n-1$ but otherwise arbitrary, each $t_1$ is at most of order $n-2$ and then by the induction hypothesis $\mathbf{(1)}\ast t_1$ is in the image by $\psi$ of a solution that is at most $W^0_{n+1}$ and at least $W_3^0$. Summing also over all permutations of $K=\{p_1,\dots p_{n+1}\}$ we get
 \begin{align}\label{eq:proff2}
  \psi^\ast\left(\sum_{\substack{t\in Y^{n-1}\\\text{perm. of leaf labels}}}\mathbf{(1)}\ast t\right) &= \sum_{L\cup M=K, |L|=1} K_p(q,\bar{q}) W_2^0(q,L) W_{n+1}^0(\bar{q},M) \notag\\
 +\sum_{L\cup M=K, |L|>1} K_p(q,\bar{q})& W_{l+1}^0(q,L) W_{m+1}^0(\bar{q},M)\notag\\
 =\sum_{L\cup M=K} K_p(q,\bar{q}) & W_{l+1}^0(q,L) W_{m+1}^0(\bar{q},M)\notag\\
 =W_{n+2}^0&(p,p_1,\dots,p_{n+1}).
\end{align}
\end{proof}
\subsection{Genus higher than 0}
The procedure of attaching an edge to two consecutive leaves and producing a graph with loops allows to represent correlations functions with genus $g>0$. This is equivalent to extract the outermost cylinders $W^0_2(x,p_j),W_2^0(y,p_{j+1}), x=q_i$ or $\bar{q}_i$, $y=\bar{q}_j$ or $q_j$ and to couple a cylinder $W^0_2(x,y)$  to two recursion kernels $K_{q_l}(q_i,\bar{q}_i)$ and $K_{q_m}(q_j,\bar{q}_j)$, for some convenient choice of indices, that are identified with two internal vertices. This procedure does not change the Euler characteristic of the associated correlation functions because the number of pairs of leaf labels is reduced exactly as the genus is increased. For instance with this procedure we can make the sequence
\begin{equation}
W_5^0(p,p_1,\dots p_4)\longrightarrow W^1_3(p,p_1,p_2)\longrightarrow W^2_1(p)
\end{equation} 
and remain in the same graded vector space that contains $k[Y^3]$. How this changes the Hopf algebra structure is not yet clear. For now, we define the operation $_{i}\leftrightarrow_{i+1}$ on a planar binary tree. 
\begin{defn}\label{defn:connecting}
Starting with a planar binary tree of order $n$ and $n+2$ labels (including the root label $p$) the operation $_{i}\leftrightarrow_{i+1}$ consists in erasing the labels of the leaves $i$ and $i+1$ then connecting them by an edge and finally relabeling the remaining leaves, now numbered $j$ with $j=0,\dots,n-2$, with the $p_{j+1}$ labels, producing in this way a graph with one loop.
\end{defn}
Therefore we represent a correlation function $W^g_{k}(p,p_1,\dots,p_{k-1})$ of genus $g$ by graphs with loops $t^g$ that are obtained by successive applications of the $_{i}\leftrightarrow_{i+1}$ operation. We denote by $\left(Y^n\right)^g$ the set of different graphs with $g$ loops that are obtained from trees $t\in Y^n$.
\begin{defn}
A correlation function $W^g_{k}(p,p_1,\dots,p_{k-1})$ of genus $g$ and Euler characteristic $\chi=2-2g-k$ is represented by a sum of all different graphs with loops $t^g\in \left(Y^n\right)^g$ for  $n=-\chi$:
\begin{equation}
\psi\left(W^g_{k}(p,p_1,\dots,p_{k-1})\right)=\sum_{t^g\in (Y^n)^g} t^g
\end{equation}
\end{defn}
\begin{rem}
Two graphs $(t)^g,(t')^g\in (Y^n)^g$ are considered different in the obvious way. Either the underlying binary trees $t,t'\in Y^n$ are distinct as base elements of $k[Y^\infty]$ or the tree $t$ has a pair of leaves, say $(i,i+1)$ that are identified with an edge in $(t)^g$ producing a loop and are free in $(t')^g$ (and reciprocally because the two graphs have the same genus). 
\end{rem}
In particular $W_1^1(p)$ is represented by a single graph with one loop denoted $(\mathbf{1})^1$ whose underlying planar binary tree is $(\mathbf{1})$. More generally we have
\begin{prop}\label{prop:secterm-W1}
The second summand of the topological recursion formula for the correlation function $W_{n}^1(p,p_1,\dots,p_{n-1})$ with $\chi=-n$ is represented by the sum 
\begin{equation}\label{eq:prop2}
\sum_{\substack{(t_1)^1\in \left(Y^p\right)^1, t_2\in Y^q\\p+q+1=n}}(t_1)^1 \vee t_2 +\sum_{\substack{t_1\in Y^p, (t_2)^1\in \left(Y^q\right)^1\\p+q+1=n}} t_1\vee (t_2)^1
\end{equation}
where each underlying planar binary tree $t$ of order $n$ is decomposed as $t=t_1\vee t_2$, with $|t_1|=p$, $|t_2|=q$, $p+q+1=n$.
\end{prop}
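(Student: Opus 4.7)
The plan is to expand the second summand of the topological recursion formula (\ref{toprec}) in the case $g=1$, $k+1=n$, apply the representation map $\psi$ term by term using the results established for genus $0$ and the definition of $\psi$ in genus $1$, and match the output to the two sums in (\ref{eq:prop2}). Writing $K = \{p_1,\ldots,p_{n-1}\}$, the $h=0$ and $h=1$ pieces of the second summand give
\begin{equation*}
K_p(q,\bar q)\sum_{L\cup M=K}\bigl(W^0_{|L|+1}(q,L)\,W^1_{|M|+1}(\bar q,M)+W^1_{|L|+1}(q,L)\,W^0_{|M|+1}(\bar q,M)\bigr),
\end{equation*}
and each $W^g$ factor admits a tree or graph-with-loops representation via $\psi$.

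First I would use the genus $0$ theorem to replace $W^0_{|L|+1}(q,L)$ by the sum over permutations of leaf labels of the trees in $Y^{|L|-1}$, and the definition preceding Proposition \ref{prop:secterm-W1} to replace $W^1_{|M|+1}(\bar q,M)$ by the sum of graphs in $(Y^{|M|+1})^1$. Next I invoke the identification, already used in the genus $0$ proof, that the recursion kernel $K_p(q,\bar q)$ attached to two sub-representations corresponds to their grafting $\vee$; this promotes each summand $W^0\cdot W^1$ (resp.\ $W^1\cdot W^0$) to a graph of the form $t_1\vee (t_2)^1$ (resp.\ $(t_1)^1\vee t_2$). Setting $p=|L|-1$, $q=|M|+1$ in the first case and $p=|L|+1$, $q=|M|-1$ in the second, the constraint $|L|+|M|=n-1$ becomes exactly $p+q+1=n$, reproducing the summation range of (\ref{eq:prop2}).

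The main point that requires care is that this correspondence is bijective, i.e.\ that every graph of the form $(t_1)^1\vee t_2$ or $t_1\vee (t_2)^1$ on the right of (\ref{eq:prop2}) arises from a unique choice of splitting $L\cup M=K$ together with a unique pair of representatives of the relevant $W^0$ and $W^1$ factors. This reduces to two facts: the uniqueness of the decomposition $t=t_1\vee t_2$ of a planar binary tree, already exploited in the preceding proposition, and the observation that the operation $_i\leftrightarrow_{i+1}$ of Definition \ref{defn:connecting} only connects adjacent leaves, so the resulting loop never crosses the root. Consequently the loop sits entirely in one of the two branches $t_1$ or $t_2$, and the decomposition uniquely determines whether a given graph belongs to the first or the second sum in (\ref{eq:prop2}). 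This combinatorial check is the only genuine obstacle, and once it is verified the proposition follows by linearity of $\psi$.
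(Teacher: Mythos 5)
Your proof is correct in substance but takes a genuinely different route from the paper. The paper argues by induction on $n$: it generates the order-$n$ configurations from the order-$(n-1)$ trees via the Loday--Ronco identity $(\mathbf{1})\ast t' = |\vee t' + ((\mathbf{1})\ast t'_1)\vee t'_2$, applies $\left(_{i}\leftrightarrow_{i+1}\right)_{\text{same branches}}$ to each resulting term, collects the distinct graphs, and only at the end translates back through $\psi^\ast$ to recover the second summand of the recursion. You instead expand the second summand directly into its $h=0$ and $h=1$ pieces, substitute the already-established genus-$0$ representation and the genus-$1$ definition of $\psi$ for each factor, identify kernel insertion with grafting, and match index ranges. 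Your route is more elementary and makes the dimension bookkeeping ($p=|L|-1$, $q=|M|+1$, and so on, giving $p+q+1=n$) completely transparent; the paper's route has the advantage of exhibiting the genus-$1$ graphs as arising from the generator $(\mathbf{1})^{\ast n}$ by the loop operation, which is what the subsequent theorem on $W^1_n$ needs. Both establish the same correspondence.

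One assertion in your bijectivity argument is false and should be removed: it is not true that the operation $_{i}\leftrightarrow_{i+1}$ ``only connects adjacent leaves, so the resulting loop never crosses the root.'' Adjacent leaves can lie in opposite branches --- the rightmost leaf of $t_1$ and the leftmost leaf of $t_2$ in $t=t_1\vee t_2$ are adjacent --- and identifying them produces precisely the graphs $t_1\bridge t_2$ that Proposition \ref{first-term-top-rec} uses to represent the \emph{first} summand $K_p(q,\bar{q})W^0_{n+1}(q,\bar{q},p_1,\dots,p_{n-1})$. Your conclusion survives because the graphs appearing in (\ref{eq:prop2}) are by construction of the form $(t_1)^1\vee t_2$ or $t_1\vee (t_2)^1$, so their loops sit in a single branch by definition and the unique $\vee$-decomposition does determine which of the two sums a given graph belongs to; but the justification as you state it would, taken literally, make the first term of the recursion unrepresentable, so replace it with the observation that the loop's position relative to the root vertex is an invariant of the graph.
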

\begin{proof}
For $n=2$ the underlying trees of $W_2^1(p,p_1)$ are the same of $W^0_4(p,p_1,p_2,p_3)$ namely $t=(\mathbf{12})+(\mathbf{21})$. Remembering that $(\mathbf{12})=(\mathbf{1})\vee|$ and $(\mathbf{21})=|\vee(\mathbf{1})$ and applying $_{i}\leftrightarrow_{i+1}$ to $(\mathbf{1})$ for $i=0$ (\ref{eq:prop2}) is the same as
$$
(\mathbf{1})^1\vee | + |\vee (\mathbf{1})^1.
$$
which is the image by $\psi$ of the second term of $W_2^1(p,p_1)$ in the topological recursion formula.
For $n$ arbitrary consider the trees $t'=t'_1\vee t'_2, t'\in Y^{n-1}$ with $|t'_1|=a, |t'_2|=b, a+b+1=n-1$ whose sum represents solutions $W^0_{n+1}(p,p_1,\dots,p_n)$ by the induction hypothesis.  By (\ref{eq:shuffleident1}) and (\ref{eq:shuffleident2}) we have 
\begin{equation}\label{eq:shuffleident1a}
(\mathbf{1})\ast t'= |\vee t' + ((\mathbf{1})\ast t'_1)\vee t'_2.
\end{equation}
Noting that $t^{''}=(\mathbf{1})\ast t'_1$  are at most of order $n-1$ and at least of order $1$ and identifying pairs of leaves on the same branches by applying $_{i}\leftrightarrow_{i+1}$ to each component of the grafting operation the last formula gives
\begin{equation}\label{eq:identifysepbr}
\left(_{i}\leftrightarrow_{i+1}\right)_{\text{same branches}}(\mathbf{1})\ast t'=|\vee (t')^{1} +(t'')^1\vee t'_2 + t''\vee (t'_2)^1
\end{equation}
for each pair of leaves $(i, i+1)$ on the left or right branches. To collect all  different terms produced in this way is equivalent to sum over $(t')^1\in \left(Y^{n-1}\right)^1$, also over $t''\in Y^{a+1}$ and $(t'')^{1}\in \left(Y^{a+1}\right)^1$ for $0\le a\le n-2$, and finally over $t'_2\in Y^{b}$ and $(t'_2)^1\in \left(Y^{b}\right)^1$ for $0\le b\le n-2$. Then the sum of all different terms given by (\ref{eq:identifysepbr}) is\footnote{Assuming the convention that for a tree $t$ of order 0, $(t)^1=0$.}
\begin{align}\label{eq:identifysepbr2}
& \sum_{i=0}^{n-1}\sum_{t'\in Y^{n-1}}\left(_{i}\leftrightarrow_{i+1}\right)_{\text{same branches}}(\mathbf{1})\ast t'=\sum_{(t')^1\in \left(Y^{n-1}\right)^1} |\vee (t')^1\notag\\
& +\sum_{\substack{(t'')^{1}\in \left(Y^{a+1}\right)^1, t'_2\in Y^{b}\\a+b+1=n-1}}(t'')^1\vee t'_2 + \sum_{\substack{t^{''}\in Y^{a+1},(t'_2)^1\in \left(Y^{b}\right)^1\\a+b+1=n-1}}t''\vee (t'_2)^1\notag\\
&=\sum_{\substack{(t_1)^1\in \left(Y^p\right)^1, t_2\in Y^q\\p+q+1=n}}t_1^1 \vee t_2 +\sum_{\substack{t_1\in Y^p, t^1_2\in \left(Y^q\right)^1\\p+q+1=n}} t_1\vee t_2^1
\end{align}
where now $t_1$ and $t_2$ are the left and right branches of a tree $t=t_1\vee t_2$ of order $n$. Note that on the right the first sum starts at $p=1$ which is the lowest possibility for a $g=1$ graph. Then the highest value of $q$ is $n-2$. For the same reason the second sum on the right starts at $q=1$ which implies that $p\le n-2$. Translating (\ref{eq:identifysepbr2}) to the topological recursion we have 
\begin{align}
& \psi^\ast\left(\sum_{i=0}^{n-1}\sum_{t'\in Y^{n-1}}\left(_{i}\leftrightarrow_{i+1}\right)_{\text{same branches}}(\mathbf{1})\ast t'\right)=K_p(q,\bar{q})\times\notag\\
&\left(\sum_{L\cup M=\{p_1,\dots,p_{n-1}\}, |L|=1}W_2^0(q,L)W_{m+1}^1(\bar{q},M)\right.\notag\\
&\left.+\sum_{L\cup M=\{p_1,\dots,p_{n-1}\}, |L|>1}W_{l+1}^1(q,L)W_{m+1}^0(\bar{q},M)
+W_{l+1}^0(q,L)W_m^1(\bar{q},M)\right)\notag\\
&=K_p(q,\bar{q})\times\notag\\
&\left(\sum_{L\cup M=\{p_1,\dots,p_{n-1}\}}W_{l+1}^0(q,L)W_{m+1}^1(\bar{q},M)
+W_l^1(q,L)W_m^0(\bar{q},M)\right)
\end{align}
which is the second term of the topological recursion formula for $W_{n}^1(p,p_1,\dots,p_{n-1})$.
\end{proof}
Now we consider the first term in topological recursion which for $W_{n}^1(p,p_1,\dots,p_{n-1})$ is
\begin{equation}\label{eq:firstW^1}
K_p(q,\bar{q})W^0_{n+1}(q,\bar{q},p_1,\dots,p_{n-1}).
\end{equation} 
We start by a definition:
\begin{defn}
The ungrafting operation $\raisebox{1.7mm}{$\line(1,0){10}$}\hspace{-3mm}\vee$ is defined by removing from $t$ the tree $(\mathbf{1})$ that contains the root producing a forest with two trees $t_1$ and $t_2$. When $t$ represents an instance of a correlation function then the roots of $t_1$ and $t_2$ are labeled by $q$ and $\bar{q}$ and as before the tree $(\mathbf{1})$ represents $K_p(q,\bar{q})$.
\end{defn}
\begin{rem}
The operations $\vee$ and $\raisebox{1.7mm}{$\line(1,0){10}$}\hspace{-3mm}\vee$ are similar to the operations $B^+$ and $B^-$ of the Connes-Kreimer Hopf Algebra described, for instance, in \cite{MR1725011}.
\end{rem}
If we start with the planar binary tree $t\in Y^n$ with leaves labels $p_1,\dots, p_{n+1}$ and root label $p$ and identify two nearest neighbor leaves in opposite branches then we get a 1-loop graph $t^1\in \left(Y^n\right)^1$ with a relabeling $p_1,\dots,p_{n-1}$. Then, by applying $\raisebox{1.7mm}{$\line(1,0){10}$}\hspace{-3mm}\vee$, we get another tree $t'$ with two more edges with labels $q$ and $\bar{q}$ besides the leaves labelled by $p_1,\dots,p_{n-1}$. This tree is isomorphic as a graph to a planar binary tree in $Y^{n-1}$ that we denote also $t'$ by promoting the edge with the label $q$ to the root and the other edge to the rightmost leaf, see fig \ref{fig:ungraft} for an example with $W_3^1(p,p_1,p_2)$. In this way we get a representation of (\ref{eq:firstW^1}) by summing over all planar binary trees $t'\in Y^{n-1}$.
\begin{figure}
 \begin{tikzpicture}
     \draw[thick] (1,-1) -- (1,-0.5) -- (0.5,0) -- (1,-0.5) -- (1.5,0);
     \draw[thick] (0.2,1.15)--(0.7,1.65)--(0.5,1.45)--(0.3,1.65)--(0.5,1.45)--(0.2,1.15)--(-0.3,1.65)-- (0.5,0.85);  \draw[thick] (1.5,0.85)--(1.95,1.3)--(2.3,1.65);
     \draw (0.5,0.2) node{ $q$};\draw (1.5,0.2) node{ $\bar{q}$};
     \draw (0.5,0.6) node{ $q$};\draw (1.5,0.6) node{ $\bar{q}$};
     \draw (1,-1.2) node{ $p$};
      \draw (-0.35,1.85) node{ $p_1$};\draw (0.3,1.85) node{ $p_2$};
      \draw[thick] (2.3,1.7) arc (45:135:1.10cm);
      \draw (2.8,0.5) node{\textbf{{\Large $\longrightarrow$}}};
       \draw[thick] (4.5,-1) -- (4.5,-0.5) -- (4,0) -- (4.5,-0.5) -- (5,0);
       \draw (4.5,-1.2) node{ $p$};
       \draw (4,0.2) node{ $q$};\draw (5,0.2) node{ $\bar{q}$};
        \draw[thick] (6,0.85) -- (5.2,1.65) -- (6,0.85)--(6.8,1.65)--(6.5,1.35) -- (6.2,1.65)--(6.5,1.35)-- (6,0.85)--(6,0.35);
         \draw (6,0.15) node{ $q$};
         \draw (5.2,1.85) node{ $p_1$};\draw (6.2,1.85) node{ $p_2$};
         \draw (6.8,1.85) node{ $\bar{q}$};
      \end{tikzpicture} 
\caption{Ungrafting a 1-loop graph of $W_3^1(p,p_1,p_2)$. The resulting tree is $(\mathbf{21})$.}\label{fig:ungraft}
\end{figure}
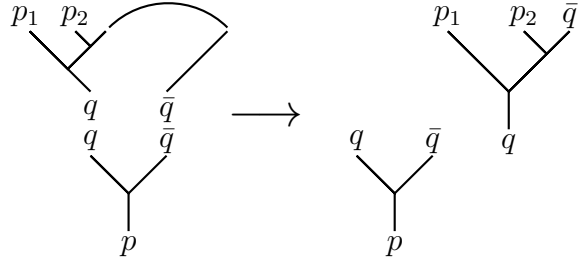
It is clear that what was left to be done in (\ref{eq:identifysepbr}) was the identification of two consecutive leaves in opposite branches. In principle there are several graphs $t^1\in(Y^n)^1$ of this type. The first one corresponds to the first term in (\ref{eq:shuffleident1a}) which gives a 1-loop graph with no leaves on the left branch. All other $t^1$ come from the second term of that formula and depend on the type of the left branch $t'_1$ of the decomposition of the tree $t'\in Y^{n-1}$, $t'=t'_1\vee t'_2$. If $t'_1=|$ then $(\mathbf{1})\ast |=(\mathbf{1})$ and the identification of leaves gives a $t^1$ with a single leaf on the left branch. The next case is $(\mathbf{1})\ast(\mathbf{1})=(\mathbf{12})+(\mathbf{21})$ and this gives a sum of two one-loop graphs with 2 leaves on the left branch. The procedure continues until $t'_2=|$ which then gives a 1-loop graph with no leaves on the right branch. Thus we have proved the following proposition
\begin{prop}\label{first-term-top-rec}
	The representation of the first term of the topological recursion formula for $W^1_{n}(p,p_1,\dots,p_{n-1})$ is given by the identification of leaves on opposite branches of the decomposition $t=t_1\vee t_2$ with $t\in Y^n, t_1\in Y^p, t_2\in Y^q,p+q+1=n$:
	\begin{equation}\label{eq:prop3}
	\psi\left(K_p(q,\bar{q})W^0_{n+1}(q,\bar{q},p_1,\dots,p_{n-1})\right)=\sum_{\substack{t_1\in Y^p, t_2\in Y^q\\p+q+1=n}}t_1 \bridge t_2 
	\end{equation}
	\end{prop}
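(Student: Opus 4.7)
The plan is to mirror the inductive structure used in Proposition \ref{prop:secterm-W1}, now applied to the case where the two adjacent leaves being identified sit in opposite branches of the decomposition $t=t_1\vee t_2$. I would proceed by induction on $n$. For the base case $n=1$, the only option is $t_1=t_2=|$, and the single graph $|\bridge|$ is the vertex $(\mathbf{1})$ with its two leaves joined by a loop; reading the root as $K_p(q,\bar q)$ and the loop edge as the cylinder $W_2^0(q,\bar q)$ reproduces the first term of topological recursion for $W_1^1(p)$.

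For the inductive step I would use the genus-0 theorem to write $W_{n+1}^0(q,\bar q,p_1,\dots,p_{n-1})$ as $\sum_{t'\in Y^{n-1}} t'$, with the root of $t'$ labeled $q$ and its $n$ leaves labeled by a permutation of $\{\bar q,p_1,\dots,p_{n-1}\}$. Attaching the kernel $K_p(q,\bar q)$ above such a tree amounts to grafting the vertex $(\mathbf{1})$ over the root (matching its left leaf with the $q$-root) and then drawing an edge from its right $\bar q$-leaf to whichever leaf of $t'$ carries the label $\bar q$; the result is a 1-loop graph of order $n$. The identity (\ref{eq:shuffleident1a}),
\begin{equation*}
(\mathbf{1})\ast t' \;=\; |\vee t' \;+\; \bigl((\mathbf{1})\ast t'_1\bigr)\vee t'_2, \qquad t' = t'_1\vee t'_2,
\end{equation*}
controls how the prefixed kernel sits relative to the decomposition $t'=t'_1\vee t'_2$. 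After the opposite-branch identification, the first summand contributes the graphs $|\bridge t'$ (no free leaves on the left branch), while the second contributes $t_1\bridge t'_2$ with $t_1$ running over the trees appearing in $(\mathbf{1})\ast t'_1$; by the genus-0 inductive statement, as $t'_1$ varies over $Y^a$ these $t_1$ exhaust $Y^{a+1}$ for each $0\le a\le n-2$.

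Summing over every $t'\in Y^{n-1}$ and over every allowed position of the $\bar q$-leaf, one finds that each pair $(t_1,t_2)$ with $|t_1|+|t_2|+1=n$ arises exactly once, consistent with the Catalan recursion $C_n=\sum_{p+q+1=n}C_pC_q$. Taking the pre-image under $\psi$ then translates the root vertex back to $K_p(q,\bar q)$ and the loop edge back to $W_2^0(q,\bar q)$, producing the expected $K_p(q,\bar q)\,W_{n+1}^0(q,\bar q,p_1,\dots,p_{n-1})$. The main obstacle I anticipate is the combinatorial bookkeeping: one must verify that the bijection between 1-loop graphs $t_1\bridge t_2$ and pairs (tree $t'\in Y^{n-1}$, position of $\bar q$-leaf) implicit in the ungrafting operation illustrated in Figure \ref{fig:ungraft} coincides exactly with the enumeration produced by iterated application of (\ref{eq:shuffleident1a}); once this matching is in place, the right-hand side of the proposition agrees with the first term of topological recursion term by term.
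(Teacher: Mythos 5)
Your strategy is essentially the paper's: interpret the first term through the grafting/ungrafting correspondence between $1$-loop graphs $t_1\bridge t_2$ and kernel-prefixed trees of order $n-1$, and use the identity (\ref{eq:shuffleident1a}) to enumerate, by the order of the left branch, which opposite-branch identifications occur. The paper presents this as a direct enumeration rather than an induction, but the content is the same, and your base case and your use of the genus-$0$ theorem to identify $(\mathbf{1})\ast t_1'$ with the sum over $Y^{a+1}$ are both fine. The one step that would fail if taken literally is the bookkeeping you yourself flag at the end: you propose to sum over every $t'\in Y^{n-1}$ \emph{and} over every position of the $\bar q$-leaf, and to match this set against the pairs $(t_1,t_2)$ with $p+q+1=n$. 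That cannot be a bijection: a tree of order $n-1$ has $n$ leaves, so there are $n\,c_{n-1}$ pairs (tree, leaf position), whereas there are only $c_n=\sum_{p+q=n-1}c_p c_q$ graphs $t_1\bridge t_2$, and $n\,c_{n-1}>c_n$ already for $n=3$ (six versus five). The resolution, implicit in the paper's ungrafting picture (fig.\ \ref{fig:ungraft}), is that the position of $\bar q$ in the ungrafted tree is not a free choice: it is determined by $(t_1,t_2)$, being the image of the root of $t_2$ after $t_2$ is re-rooted at its first leaf --- an operation that also reverses the left-right order of $t_2$, a point the paper records and needs later for the symmetric-graph lemma. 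So the enumeration must come from (\ref{eq:shuffleident1a}) alone, one graph $t_1\bridge t_2$ for each decomposition $t=t_1\vee t_2$ of each $t\in Y^n$, with no independent sum over $\bar q$-positions; with that correction your induction closes and agrees with the paper's argument.
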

The obvious notation $\phantom{.}\bridge\phantom{.}$ means that two consecutive leaves in opposite branches are identified.

Therefore we have exhausted all possibilities of obtaining 1-loop graphs from planar binary trees of order $n$ and the two previous propositions imply the following theorem:
\begin{thm}
The $n$ order solution $W^1_{n}(p,p_1,\dots,p_{n-1})$ of the topological recursion in genus $1$ is given by
$\mathbf{(1)}\ast\mathbf{(1)}\ast\dots\ast\mathbf{(1)}$, with $n$ factors, followed the identification of pairs of nearest neighbor leaves producing 1-loop graphs and finally by  summing over all permutations of leaves labels $p_1,p_2,\dots, p_{n-1}$:
\begin{align}
\psi\left(W^1_{n}(p,p_1,\dots,p_{n-1})\right)&=\notag\\
\sum_{\text{perm. of }\{p_1,\dots,p_{n-1}\}}\sum_{i=0}^{n-1} 
& _{i}\leftrightarrow_{i+1} \left(\mathbf{(1)}\ast\mathbf{(1)}\ast\dots\ast\mathbf{(1)}\right)
\end{align}
\end{thm}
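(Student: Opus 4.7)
The plan is to combine the two preceding propositions, which separately account for the first and the second summand of the topological recursion formula in genus one, and to show that together they exhaust precisely what is produced by applying $_{i}\leftrightarrow_{i+1}$ for all adjacent pairs $(i,i+1)$ of leaves to the $n$-fold $\ast$-product $\mathbf{(1)}\ast\dots\ast\mathbf{(1)}$.

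First I would invoke the genus $0$ Theorem to rewrite $\mathbf{(1)}\ast\mathbf{(1)}\ast\dots\ast\mathbf{(1)} = \sum_{t\in Y^n} t$ as a formal linear combination of all planar binary trees of order $n$, so that the left hand side of the claim becomes $\sum_{i=0}^{n-1}\sum_{t\in Y^n}\ _{i}\leftrightarrow_{i+1}(t)$ followed by symmetrisation over the leaf labels. Every tree $t\in Y^n$ has a unique decomposition $t = t_1\vee t_2$ with $|t_1|=p$, $|t_2|=q$, $p+q+1=n$, and this induces a partition of the leaves of $t$ into those belonging to $t_1$ and those belonging to $t_2$.

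Next, for each pair $(t,i)$ I would classify the identification $_{i}\leftrightarrow_{i+1}$ according to the position of the pair $(i,i+1)$ relative to this root decomposition. Either both leaves lie in $t_1$, or both lie in $t_2$ (the \emph{same-branch} case), or leaf $i$ is the rightmost leaf of $t_1$ and leaf $i+1$ is the leftmost leaf of $t_2$ (the \emph{opposite-branch} case). These two cases are mutually exclusive and jointly exhaustive, and the result of the identification is respectively $(t_1)^1\vee t_2$ or $t_1\vee (t_2)^1$ in the first case, and $t_1\bridge t_2$ in the second. Summing over all $t$ and all valid $i$ in the same-branch class, one reproduces exactly the sum appearing in Proposition~\ref{prop:secterm-W1}, while summing over the opposite-branch class reproduces the sum in Proposition~\ref{first-term-top-rec}. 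Applying $\psi^\ast$ to each of the two partial sums, the first one yields $K_p(q,\bar q)W^0_{n+1}(q,\bar q,p_1,\dots,p_{n-1})$ (the first summand of the topological recursion for $W_n^1$) and the second yields the convolution $K_p(q,\bar q)\sum_{L\cup M}\bigl(W^0_{l+1}(q,L)W^1_{m+1}(\bar q,M)+W^1_l(q,L)W^0_m(\bar q,M)\bigr)$ (the second summand), and finally summing over permutations of $\{p_1,\dots,p_{n-1}\}$ assembles both summands into the full right hand side of~(\ref{toprec}) for $g=1$.

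The main obstacle I expect is the combinatorial bookkeeping: I have to check that each graph $t^1\in (Y^n)^1$ is produced exactly once by the pair $(t,i)$ that gave rise to it, that the same-branch and opposite-branch classes do not overlap, and that the labelling conventions (which leaves are identified with cylinders $W^0_2(q_j,p_k)$ versus $W^0_2(\bar q_j,p_l)$) match the ones used in the two propositions so that no spurious factor or permutation appears. Once this enumeration is in place, linearity of $\psi$ and the two previously established propositions complete the argument without further calculation.
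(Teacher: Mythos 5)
Your proposal matches the paper's own argument, which simply asserts that Propositions~\ref{prop:secterm-W1} and~\ref{first-term-top-rec} together ``exhaust all possibilities of obtaining 1-loop graphs'' and hence imply the theorem; your explicit classification of each pair $(i,i+1)$ as same-branch or opposite-branch relative to the decomposition $t=t_1\vee t_2$ is precisely the exhaustiveness check the paper leaves implicit. The only blemish is a swapped antecedent in your penultimate sentence (the same-branch sum gives the convolution, i.e.\ the \emph{second} summand, and the opposite-branch sum gives $K_p(q,\bar q)W^0_{n+1}(q,\bar q,K)$, the \emph{first}), which you had already stated correctly one sentence earlier.
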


Next we prove a simple lemma regarding symmetric graphs as in fig. \ref{fig:symgraph}. Note that the resulting ungrafted graphs have the left-right order of the right branch of the original graph exchanged:
\begin{lem}
If a graph that enters in the representation of the correlation function $W^{2g+1}_{k+1}(p,K)$ has nearest neighbor leaves identified in different branches and is symmetric with respect to the vertical axis that passes through the root then it has a weight factor of $2$, that is, it appears two times in the complete graph representation of $W^{2g+1}_{k+1}(p,K)$.
\begin{proof}
First note that such a graph has an even number of leaves, say $2a$, possibly 0. After being ungrafted the resulting graph represents the following term in the topological recursion formula:
\begin{align}
&K_p(q,\bar{q})W^{2g}_{2a+2}(q,\bar{q},K)\notag\\
&=K_p(q,\bar{q})K_q(q_1,\bar{q}_1)\left(W^{g}_{a+2}(q_1,\bar{q},L)W^{g}_{a+1}(\bar{q}_1,M)\right.\notag\\
&\left. +W^{g}_{a+1}(q_1,L)W^{g}_{a+2}(\bar{q}_1,\bar{q},M)+\dots\right)
\end{align}
for $|L|=|M|=a$ and $L\cup M =K$ where the dots represent other terms that are not symmetric and that are represented by other graphs. This shows that the same graph after being ungrafted originates two symmetrical graphs and as such it has a weight factor of 2 when counting the number of graphs in the complete representation.
\end{proof}
\end{lem}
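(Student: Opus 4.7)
The plan is to count the multiplicity of the symmetric graph in the topological recursion for $W^{2g+1}_{k+1}(p,K)$ by ungrafting at the root and re-expanding. Since the graph has its nearest neighbor identifications only between opposite branches and is symmetric across the vertical axis through the root, each identified pair is fixed as a pair by the reflection, and the remaining free leaves must come in mirror-image pairs, so their total count is even, say $2a$. I would accordingly start by writing the underlying planar binary tree as $t=t_1\vee t_2$ with $|t_1|=|t_2|$ and $t_2$ the left-right mirror of $t_1$.

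Next, I would apply the ungrafting operation $\raisebox{1.7mm}{$\line(1,0){10}$}\hspace{-3mm}\vee$ to remove the root vertex $(\mathbf{1})$, which represents $K_p(q,\bar{q})$, exposing labels $q,\bar{q}$ on the two detached root edges. By Proposition \ref{first-term-top-rec} the resulting graph contributes to the first term $K_p(q,\bar{q})W^{2g}_{2a+2}(q,\bar{q},K)$ of the recursion for $W^{2g+1}_{k+1}$. Expanding $W^{2g}_{2a+2}(q,\bar{q},K)$ once more, with $q$ now playing the role of root, the only terms that after regrafting reproduce the same symmetric graph are those that place $q$ in one factor and $\bar{q}$ in the opposite one:
\begin{equation}
W^{g}_{a+2}(q_1,\bar{q},L)\,W^{g}_{a+1}(\bar{q}_1,M)\quad\text{and}\quad W^{g}_{a+1}(q_1,L)\,W^{g}_{a+2}(\bar{q}_1,\bar{q},M),
\end{equation}
with $L\cup M=K$ and $|L|=|M|=a$.

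The key observation is that reflection across the vertical axis sends the first summand onto the second, interchanging $L$ with $M$ and swapping the two factors; because the original graph is symmetric, this reflected configuration produces the same graph after regrafting. Any term that places both $q$ and $\bar{q}$ in the same factor, or uses a non-mirrored partition, yields a graph that cannot be symmetric and is therefore counted separately. Hence exactly two expansion terms contribute the same graph, giving the weight factor $2$.

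The main obstacle is the bookkeeping needed to make the reflection argument rigorous. I would need to verify that (i) a symmetric graph of this shape cannot also arise from the second term of topological recursion applied to a lower-genus graph, (ii) the ungrafting/regrafting correspondence is a bijection on the contributions that are not fixed by the reflection, and (iii) the stabilizer under the combined swap $\{q,\bar{q}\}$ and $L\leftrightarrow M$ has order exactly two, ruling out any higher multiplicity. A careful induction on $g$, grounded in the genus $0$ and genus $1$ cases already established, together with Proposition \ref{first-term-top-rec}, should close all three points.
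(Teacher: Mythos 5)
Your proposal follows essentially the same route as the paper: ungraft at the root to land in the first term $K_p(q,\bar{q})W^{2g}_{2a+2}(q,\bar{q},K)$, expand once more, and observe that exactly the two mirror-image terms $W^{g}_{a+2}(q_1,\bar{q},L)W^{g}_{a+1}(\bar{q}_1,M)$ and $W^{g}_{a+1}(q_1,L)W^{g}_{a+2}(\bar{q}_1,\bar{q},M)$ regraft to the same symmetric graph, giving the factor of $2$. The verification points you flag at the end are in fact left implicit in the paper's own (equally informal) argument, so your attempt is, if anything, slightly more explicit about what is being assumed.
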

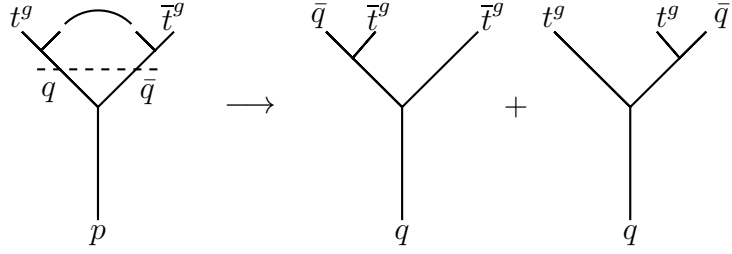
\begin{figure}
 \begin{tikzpicture}
 \draw[thick](1,-0.5) -- (1,1) --(0.25,1.75)--(0.5,2)--(0.25,1.75)-- (0,2) -- (1,1) -- (1.75,1.75)--(1.5,2)--(1.75,1.75)--(2,2) ; \draw (0,2.2) node{ $t^g$};
 \draw (2,2.2) node{ $\overline{t}^g$}; \draw[thick] (1.44,2.1) arc (45:135:0.62cm);
 \draw (3,1) node{ $\longrightarrow$};
 \draw[thick,dashed] (0.2,1.5)--(1.8,1.5);\draw (0.35,1.2) node{ $q$};
 \draw (1.65,1.2) node{ $\bar{q}$};
 \draw (1,-0.7) node{ $p$};
 \draw[thick] (5,-0.5) -- (5,1) -- (4,2) -- (4.35,1.65)--(4.65,2)--(4.35,1.65)-- (5,1) -- (6,2);\draw (6.5,1) node{ $+$}; \draw[thick] (8,-0.5) -- (8,1) -- (7,2) -- (8,1) -- (9,2)--(8.65,1.65)--(8.35,2)--(8.65,1.65);
  \draw (3.9,2.2) node{ $\bar{q}$};\draw (4.7,2.2) node{$\overline{t}^g$};\draw (6.2,2.2) node{ $\overline{t}^g$};
 \draw (7,2.2) node{ $t^g$};\draw (8.5,2.2) node{$t^g$};
  \draw (9.2,2.2) node{ $\bar{q}$};
  \draw (5,-0.7) node{ $q$}; \draw (8,-0.7) node{ $q$};
 \end{tikzpicture} 
\caption{A symmetric $2g+1$-loop graph with $t^g$ a $g-$loop graph. The graph $\overline{t}^g$ is the reflection of $t^g$ on the vertical axis that passes through the root. After the ungrafting operation the first graph is obtained by exchanging $q$ and $\bar{q}$ on the original graph.}\label{fig:symgraph}
\end{figure}
\begin{defn2}
See fig. \ref{fig:W13} for the graph representation of $W_3^1(p,p_1,p_2)$ with a graph of weight 2.
\end{defn2}
\begin{figure}[h]
	\begin{tikzpicture}
	\draw[thick] (1,-0.5) -- (1,1) -- (0,2) -- (0.2,1.8)--(0.4,2)--(0.2,1.8)-- (0.5,1.5)--(1,2)--(0.5,1.5)-- (1,1) -- (2,2);\draw (2.5,1) node{\textbf{{\Large $+$}}}; \draw[thick] (4,-0.5) -- (4,1) -- (3,2) -- (4,1) -- (5,2)--(4.8,1.8)--(4.6,2)--(4.8,1.8)--(4.5,1.5)--(4,2);\draw (5.5,1) node{\textbf{{\Large $+$}}}; \draw[thick] (7,-0.5) -- (7,1) -- (6,2) -- (7,1) -- (8,2)--(7.5,1.5)--(7.25,1.75)--(7.5,2)--(7.25,1.75)--(7,2);\draw[thick] (2,2.1) arc (45:135:0.67cm);\draw[thick] (3.95,2.1) arc (30:150:0.58cm);\draw[thick] (6.95,2.1) arc (30:150:0.58cm);
	\end{tikzpicture}
	\begin{tikzpicture}
	\draw (-0.5,1) node{\textbf{{\Large $+$}}};\draw[thick] (1,-0.5) -- (1,1) -- (0,2) -- (0.35,1.65)--(0.65,2)--(0.35,1.65)-- (1,1) --(1.65,1.65)--(1.35,2)--(1.65,1.65) -- (2,2);\draw (2.5,1);  \draw (3.5,1) node{\textbf{{\Large $+$}}}; \draw[thick] (7,-0.5) -- (7,1) -- (6,2) -- (6.5,1.5)--(6.75,1.75)--(6.5,2)--(6.75,1.75)--(7,2)-- (6.5,1.5) -- (7,1)-- (8,2);
	\draw[thick] (2,2.1) arc (30:150:0.35cm);\draw[thick] (8,2.1) arc (45:135:0.67cm);
	\end{tikzpicture} 
	\begin{tikzpicture}
	\draw[thick] (1,-0.5) -- (1,1) -- (0,2) -- (0.2,1.8)--(0.4,2)--(0.2,1.8)-- (0.5,1.5)--(1,2)--(0.5,1.5)-- (1,1) -- (2,2);\draw (2.5,1) node{\textbf{{\Large $+$}}}; \draw[thick] (4,-0.5) -- (4,1) -- (3,2) -- (4,1) -- (5,2)--(4.8,1.8)--(4.6,2)--(4.8,1.8)--(4.5,1.5)--(4,2);\draw (5.5,1) node{\textbf{{\Large $+$}}}; \draw[thick] (7,-0.5) -- (7,1) -- (6,2) -- (7,1) -- (8,2)--(7.5,1.5)--(7.25,1.75)--(7.5,2)--(7.25,1.75)--(7,2);\draw[thick] (0.4,2.1) arc (30:150:0.2cm);\draw[thick] (5,2.1) arc (45:135:0.3cm);\draw[thick] (8.05,2.1) arc (45:135:0.35cm);
	\end{tikzpicture}
	\begin{tikzpicture}
	\draw (-0.5,1) node{\textbf{{\Large $+$}}};\draw[thick] (1,-0.5) -- (1,1) -- (0,2) -- (0.35,1.65)--(0.65,2)--(0.35,1.65)-- (1,1) --(1.65,1.65)--(1.35,2)--(1.65,1.65) -- (2,2);\draw (2.5,1);  \draw (3.5,1) node{\textbf{{\Large $+$}}}; \draw[thick] (7,-0.5) -- (7,1) -- (6,2) -- (6.5,1.5)--(6.75,1.75)--(6.5,2)--(6.75,1.75)--(7,2)-- (6.5,1.5) -- (7,1)-- (8,2);
	\draw[thick] (0.6,2.1) arc (30:150:0.35cm);\draw[thick] (6.44,2.1) arc (30:150:0.29cm);
	\end{tikzpicture} 
	\begin{tikzpicture}
	\draw[thick] (1,-0.5) -- (1,1) -- (0,2) -- (0.2,1.8)--(0.4,2)--(0.2,1.8)-- (0.5,1.5)--(1,2)--(0.5,1.5)-- (1,1) -- (2,2);\draw (2.5,1) node{\textbf{{\Large $+$}}}; \draw[thick] (4,-0.5) -- (4,1) -- (3,2) -- (4,1) -- (5,2)--(4.8,1.8)--(4.6,2)--(4.8,1.8)--(4.5,1.5)--(4,2);\draw (5.5,1) node{\textbf{{\Large $+$}}}; \draw[thick] (7,-0.5) -- (7,1) -- (6,2) -- (7,1) -- (8,2)--(7.5,1.5)--(7.25,1.75)--(7.5,2)--(7.25,1.75)--(7,2);\draw[thick] (1,2.1) arc (45:135:0.4cm);\draw[thick] (4.6,2.1) arc (45:135:0.4cm);\draw[thick] (7.5,2.1) arc (45:135:0.35cm);
	\end{tikzpicture}
	\begin{tikzpicture}
	\draw (-0.5,1) node{\textbf{{\Large $+2$}}};\draw[thick] (1,-0.5) -- (1,1) -- (0,2) -- (0.35,1.65)--(0.65,2)--(0.35,1.65)-- (1,1) --(1.65,1.65)--(1.35,2)--(1.65,1.65) -- (2,2);\draw (2.5,1);  \draw (3.5,1) node{\textbf{{\Large $+$}}}; \draw[thick] (7,-0.5) -- (7,1) -- (6,2) -- (6.5,1.5)--(6.75,1.75)--(6.5,2)--(6.75,1.75)--(7,2)-- (6.5,1.5) -- (7,1)-- (8,2);
	\draw[thick] (1.27,2.1) arc (45:135:0.42cm);\draw[thick] (7,2.1) arc (30:150:0.29cm);\draw (3.5,-1) node{\textbf{$+$ perm. of $\{p_1,p_2\}$}};
	\end{tikzpicture} 
	\caption{$W_3^1(p,p_1,p_2)$. The root label $p$ and the leaf labels $\{p_1,p_2\}$ are omitted.}\label{fig:W13}
\end{figure}
A simple but important fact is that $_{i}\leftrightarrow_{i+1}$ acts as a derivation when applied independently to the left and right branches of a tree $t=t_1\vee t_2$. This is apparent in Proposition \ref{prop:secterm-W1}. However when summing over all graphs we must take care with overcounting. If we start with $t^1=(t_1)^1\vee t_2+t_1\vee (t_2)^1$ with $|t|=n, |t_1|=p,|t_2|=q, p+q+1=n$ and apply $_{i}\leftrightarrow_{i+1}$ to the two branches independently and sum over all different graphs we get
\begin{align}
&\sum_{i=0}^{n-1} \left(_{i}\leftrightarrow_{i+1}\right)_{\text{same branches}}\left(\sum_{\substack{(t_1)^1\in (Y^{p})^1,t_2\in Y^{q}\\p+q+1=n}}(t_1)^1\vee t_2\right.\notag\\
&\left.+\sum_{\substack{t_1\in Y^p, (t_2)^1\in (Y^{q})^1\\p+q+1=n}}t_1\vee (t_2)^1\right)=\notag\\
&\sum_{\substack{(t_1)^2\in (Y^{p})^2,t_2\in Y^{q}\\p+q+1=n}}2(t_1)^2\vee t_2
+\sum_{\substack{(t_1)^1\in (Y^{p})^1,(t_2)^1\in (Y^{q})^1\\p+q+1=n}}2(t_1)^1\vee (t_2)^1\notag\\
&+\sum_{\substack{t_1\in Y^p, t_2\in Y^{q},(t_2)^2\in (Y^{q})^2\\p+q+1=n}}2t_1\vee (t_2)^2
\end{align}
whenever the operation is well defined\footnote{The operation is not well defined if there is only one leaf available before and/or after a certain loop or if there are no more leaves to contract. In this case we set to 0 the result of acting with $_{i}\leftrightarrow_{i+1}$.}. The reasoning for the 2 factors is the double counting of identical graphs. For instance if $(t_1)^1$ has a loop starting at leaf $2$ and $(t_1)^2$ was obtained by producing a second loop starting at leaf 0, then this $(t_1)^2$ is identical to the 2-loop graph that was obtained by producing a second loop at leaf 2 in $(t_1)^1$ that had already a loop starting at leaf 0. 

Therefore the sum of different 2-loop graphs obtained from all planar binary trees $t=t_1\vee t_2$ is
\begin{align}
&\sum_{(t)^2\in (Y^n)^2} (t)^2=\sum_{\substack{(t_1)^2\in (Y^{p})^2,t_2\in Y^{q}\\p+q+1=n}}(t_1)^2\vee t_2\notag\\
&+
 \sum_{\substack{(t_1)^1\in (Y^{p})^1,(t_2)^1\in (Y^{q})^1\\p+q+1=n}}(t_1)^1\vee (t_2)^1\notag\\
 &+\sum_{\substack{t_1\in Y^p, (t_2)^2\in (Y^{q})^2\\p+q+1=n}}t_1\vee (t_2)^2
\end{align}

As for the identification of two consecutive leafs in separate branches, we start again from $t^1=(t_1)^1\vee t_2+t_1\vee (t_2)^1$ to get
\begin{equation}
t^2=(t_1)^1\bridge t_2+t_1\bridge (t_2)^1
\end{equation}
The fact that it may not be always possible to contract two leaves in opposite branches is important to count the dimensions of the vector spaces $k[(Y^n)^g]$ generated by graphs with $g$ loops but we will not consider this. 

It is clear that we can continue this procedure and generate graphs $t^g\in (Y^n)^g$ with an increasing number of $g$ loops and $k$ labels (including the root) up to the consistence of the relation $-n=2-2g-k$.
Hence we have the following proposition:
\begin{prop}\label{prop:graphloop}
The graphs with loops $t^g\in (Y^n)^g$ and $k$ labels, including the root, that are compatible with $-n=2-2g-k$ are obtained from planar binary trees $t=t_1\vee t_2$ by successive applications of $\left(_{i}\leftrightarrow_{i+1}\right)_{\text{same branches}}$ and $\left(_{i}\leftrightarrow_{i+1}\right)_{\text{opposite branches}}$ to all $t\in Y^n$:
\begin{align}\label{eq:gloopgraph}
\sum_{(t)^g\in (Y^n)^g} (t)^g=&\sum_{k=0}^g\sum_{\substack{(t_1)^k\in(Y^p)^k,(t_2)^{g-k}\in(Y^q)^{g-k}\\p+q+1=n}} \left( (t_1)^k\vee (t_2)^{g-k}\right)\notag\\
&+\sum_{k=0}^{g-1} \sum_{\substack{(t_1)^k\in(Y^p)^k,(t_2)^{g-k}\in(Y^q)^{g-k}\\p+q+1=n}}\left((t_1)^{g-1-k}\bridge (t_2)^{k}\right)
\end{align}
\end{prop}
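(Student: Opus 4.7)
The plan is to exhibit an explicit bijection between $(Y^n)^g$ and the disjoint union of the two sums on the right-hand side of (\ref{eq:gloopgraph}), by using the Loday-Ronco unique decomposition of the underlying tree. Given any graph $(t)^g \in (Y^n)^g$, the underlying planar binary tree decomposes uniquely as $t = t_1 \vee t_2$ with $|t_1| = p$, $|t_2| = q$, $p + q + 1 = n$. The crucial observation is that the unique pair of nearest-neighbor leaves of $t$ lying in opposite branches of this decomposition is the rightmost leaf of $t_1$ paired with the leftmost leaf of $t_2$; every other pair of nearest-neighbor leaves is internal to $t_1$ or internal to $t_2$.

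This yields two mutually exclusive and exhaustive cases. In case (i) the distinguished opposite-branch pair is not bridged in $(t)^g$, so every loop sits strictly inside one subtree, and $(t)^g = (t_1)^k \vee (t_2)^{g-k}$ where $k$ loops lie inside $t_1$ and $g-k$ inside $t_2$. Letting $k$ range over $0, \dots, g$ and $(t_1)^k$, $(t_2)^{g-k}$ range over $(Y^p)^k$, $(Y^q)^{g-k}$ independently, we recover exactly the first sum in (\ref{eq:gloopgraph}). In case (ii) the distinguished pair is bridged, which accounts for exactly one loop; the remaining $g-1$ loops distribute as $g-1-k$ in $t_1$ and $k$ in $t_2$ for some $0 \le k \le g-1$, so $(t)^g = (t_1)^{g-1-k} \bridge (t_2)^{k}$, recovering the second sum. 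The converse inclusion, that every element of the RHS is a $g$-loop graph in $(Y^n)^g$, is immediate from the definitions of $\vee$ and $\bridge$ together with Definition \ref{defn:connecting}.

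The main obstacle is verifying that the two cases genuinely partition $(Y^n)^g$: specifically that every loop in $(t)^g$ either sits entirely within one subtree or uses the distinguished root-level opposite-branch pair, and never straddles the two possibilities in a way that causes double counting. This is ultimately a consequence of Loday-Ronco uniqueness of the decomposition $t = t_1 \vee t_2$ together with the fact that the operation $_{i}\leftrightarrow_{i+1}$ in Definition \ref{defn:connecting} only identifies nearest-neighbor leaves. A related subtlety is that the conventions from the footnote preceding the proposition, namely that $(t)^g$ vanishes when the tree has insufficient leaves to support $g$ loops, must be invoked to handle boundary terms (such as $p = 0$ or $q = 0$) consistently with the stated ranges of summation, so that no degenerate graphs are erroneously counted on either side of (\ref{eq:gloopgraph}).
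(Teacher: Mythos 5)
Your strategy---partitioning $(Y^n)^g$ according to the behaviour of the unique nearest-neighbour pair $(p,p+1)$ that straddles the decomposition $t=t_1\vee t_2$---is a reasonable static reformulation of the paper's argument, which instead proceeds inductively by freeing one identified pair, applying the induction hypothesis to the resulting $(g-1)$-loop graph, and re-identifying. Your case (i) is fine, and arguably cleaner than the paper's treatment, since a set-level bijection sidesteps the ``derivation plus double-counting'' discussion. The gap is in your ``crucial observation''. It is true that in the original tree the only nearest-neighbour pair meeting both branches is $(p,p+1)$; but Definition \ref{defn:connecting} is applied \emph{successively}, with a relabelling of the surviving leaves after each identification, so a loop of $(t)^g$ may join two leaves that were not nearest neighbours in $t$. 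In particular, once the distinguished pair $(p,p+1)$ is bridged, the leaves $p-1$ and $p+2$ become nearest neighbours and may themselves be bridged, producing a second straddling loop nested around the first. Concretely, for $t=(\mathbf{1})\vee(\mathbf{1})\in Y^3$, first identifying leaves $1,2$ and then the relabelled leaves $0,3$ yields a $2$-loop graph contributing to $W_1^2(p)$ in which \emph{both} loops cross between $t_1$ and $t_2$; it is not of the form $(t_1)^k\vee(t_2)^{2-k}$, nor is it ``one bridge plus loops internal to the branches''. So your two cases are mutually exclusive but not exhaustive, and the asserted bijection is not onto the left-hand side.

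To repair this you must read case (ii) the way the paper implicitly does (see the remark following the theorem, which notes that the second term of (\ref{eq:gloopgraph}) ``can still have leaves in opposite branches identified''): the object $(t_1)^{g-1-k}\bridge (t_2)^{k}$ must be allowed to carry further opposite-branch identifications nested around the distinguished one, generated recursively, rather than consisting of a single bridge with all remaining loops internal. Appealing to the uniqueness of the Loday--Ronco decomposition together with the fact that $_{i}\leftrightarrow_{i+1}$ ``only identifies nearest-neighbour leaves'' does not close this gap, precisely because adjacency is recomputed after every contraction. Your point about invoking the vanishing conventions for degenerate boundary terms is correct but peripheral to the main difficulty.
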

\begin{proof}
It follows from the discussion above and a simple inductive argument. The first sum on the right is a consequence of the fact that $\left(_{i}\leftrightarrow_{i+1}\right)_{\text{s. br.}}$ acts as a derivation, after taken into account double counting and summing over all different graphs, and the second sum results from applying $\left(_{i}\leftrightarrow_{i+1}\right)_{\text{opp. br.}}$ to a $(g-1)-$loop graph $t^{g-1}$ without leafs from opposite branches identified. 
To see that any graph $t^g$ on the sum on the left can be obtained in this way just take the same graph but with some pair of leafs free, say $(i,i+1)$. This is a $t^{g-1}$ graph that admits a decomposition $(t_1)^k\vee (t_2)^{g-1-k}$ or $(t_1)^k\bridge (t_2)^{g-2-k}$. In the first case, if $(i,i+1)$ belong to the left or right branches apply $\left(_{i}\leftrightarrow_{i+1}\right)_{\text{s. br.}}$ to get an element in the first sum on the right of (\ref{eq:gloopgraph}). If $(i,i+1)$ belong to opposite branches apply $\left(_{i}\leftrightarrow_{i+1}\right)_{\text{opp. br.}}$ to get an element of the second sum. The second case works in a similar way except that $(i,i+1)$ must belong to the same branches.
\end{proof}

\begin{thm}
The $n$ order solution $W^g_{2-2g+n}$ of the topological recursion in genus $g>0$ and $k=2-2g+n>0$ variables is given by
$\mathbf{(1)}\ast\mathbf{(1)}\ast\dots\ast\mathbf{(1)}$, with $n$ factors, followed the identification of pairs of nearest neighbor leaves producing graphs with loops as in Proposition \ref{prop:graphloop} and finally by  summing over all permutations of $p_1,p_2,\dots, p_{1-2g+n}$.
\end{thm}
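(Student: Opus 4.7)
The plan is to proceed by induction on the genus $g$, using the genus 0 theorem as the base case and treating $g=1$ (already established by the previous theorem) as confirmation of the inductive machinery. By the induction hypothesis we assume that, for every $h < g$ and every compatible Euler characteristic, the correlation function $W^h_{2-2h+m}(p,p_1,\dots,p_{1-2h+m})$ is represented by the sum of all graphs in $(Y^m)^h$ (with leaf labels permuted) obtained from the $\ast$-product $(\mathbf{1})\ast\cdots\ast(\mathbf{1})$ followed by nearest-neighbor identifications. The goal is to recover the full recursion formula (\ref{toprec}) at genus $g$ from the graph sum described in Proposition \ref{prop:graphloop}.

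The key step is to match the two sums on the right-hand side of (\ref{eq:gloopgraph}) with the two terms of the topological recursion. First, I would apply $\psi^\ast$ to the sum $\sum_{k=0}^g\sum_{(t_1)^k,(t_2)^{g-k}}(t_1)^k\vee(t_2)^{g-k}$. Using the identification of the root vertex with $K_p(q,\bar{q})$ and the induction hypothesis applied independently to each subgraph $(t_1)^k\in(Y^p)^k$ and $(t_2)^{g-k}\in(Y^q)^{g-k}$, this becomes
\begin{equation*}
K_p(q,\bar{q})\sum_{h=0}^g\sum_{L\cup M=K}W^h_{|L|+1}(q,L)\,W^{g-h}_{|M|+1}(\bar{q},M),
\end{equation*}
which is exactly the second summand of (\ref{toprec}). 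Next, for the bridge sum $\sum_{k=0}^{g-1}\sum (t_1)^{g-1-k}\bridge(t_2)^{k}$, I would invoke the ungrafting interpretation used in Proposition \ref{first-term-top-rec}: removing the root vertex $(\mathbf{1})$ and re-orienting one of the branches turns such a graph into a graph in $(Y^{n-1})^{g-1}$ representing $W^{g-1}_{k+2}(q,\bar{q},K)$ by the induction hypothesis. Reinserting the root vertex produces $K_p(q,\bar{q})W^{g-1}_{k+2}(q,\bar{q},K)$, which is the first summand of (\ref{toprec}).

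Finally, I would appeal to Proposition \ref{prop:graphloop} itself, which asserts that the two constructions above exhaust all graphs in $(Y^n)^g$. Combined with the genus-$0$ theorem, which identifies the sum of all underlying planar binary trees of order $n$ with the $\ast$-product $(\mathbf{1})\ast\cdots\ast(\mathbf{1})$, this produces exactly the stated formula after summing over all permutations of $\{p_1,\dots,p_{1-2g+n}\}$.

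The main obstacle I anticipate is bookkeeping the multiplicities of graphs. Two subtleties must be handled carefully: the double-counting phenomenon that arose when $(_i\!\leftrightarrow_{i+1})_{\text{s.\,br.}}$ was applied as a derivation (which is absorbed by replacing the derivation sum with the sum over distinct graphs, exactly as in the discussion preceding Proposition \ref{prop:graphloop}), and the weight-2 factor for symmetric graphs contributing to odd-genus correlation functions. The latter is precisely the content of the symmetric-graph lemma, so in the bridge contribution one must verify that after ungrafting, the two symmetric terms in $K_p(q,\bar{q})W^{g-1}_{k+2}(q,\bar{q},K)$ arising from exchanging $q$ and $\bar{q}$ match the multiplicity-2 weight attached to symmetric graphs. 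Once these two accounting points are checked, the inductive step closes and the theorem follows.
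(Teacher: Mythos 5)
Your proposal is correct and follows essentially the same route as the paper: the paper's own proof simply applies $\psi^\ast$ to the decomposition of Proposition \ref{prop:graphloop}, identifies the grafted sum with the second summand of (\ref{toprec}) and the bridge sum (after ungrafting) with the first, and invokes the genus-0 theorem to realize the underlying trees as the $n$-fold $\ast$-product. Your version is in fact more explicit than the paper's two-line argument, particularly in flagging the induction on genus and the multiplicity bookkeeping (double counting and the weight-2 symmetric graphs), which the paper leaves implicit in the preceding discussion.
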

\begin{proof}
The proof now follows easily from Proposition \ref{prop:graphloop}. Remember that the sum of all planar binary trees of order $n$ is obtained by $n$ factors of $(\mathbf{1})$ with the $\ast$ product. By ungrafting the graphs given by (\ref{eq:gloopgraph}) and using the representation map $\psi$ we get
\begin{align}\label{eq:prooftoprec}
&W^g(p,p_1,\dots,p_{k-1})=\psi^\ast\left(\sum_{\substack{(t)^g\in (Y^n)^g\\\text{perm. of leaf labels }K=\{p_1,\dots,p_{k-1}\}}}(t)^g\right)=\notag\\
&\sum_{h=0}^g\sum_{\substack{L\cup M=K\\\text{perm. of }K}}
 K_p(q,\bar{q})\left(W^{g-1}_{k+1}(q,\bar{q},K)+ W_{l+1}^h(q,L)W_{m+1}^{g-h}(\bar{q},M)\right)
\end{align}
which is the topological recursion formula for arbitrary genus.
\end{proof}
\begin{rem}
After being ungrafted the second term in (\ref{eq:gloopgraph}) can still have leaves in opposite branches identified. In general this happens if $(t)^g=(t_1)^{g_1}\bridge (t_2)^{g_2}$ with $g_1+g_2=g-1$ and say $(t_1)^{g_1}$ has a decomposition $(t_1)^{g'_1}\bridge (t_2)^{g'_2}$ with ${g'}_1+{g'}_2=g_1-1$.
 \begin{defn2}
 See fig. \ref{fig:W21} for the graph representation of $W_1^2(p)$.
 \end{defn2}
  \begin{figure}[h]
 	\begin{tikzpicture}
 	\draw[thick] (1,-0.5) -- (1,1) -- (0,2) -- (0.2,1.8)--(0.4,2)--(0.2,1.8)-- (0.5,1.5)--(1,2)--(0.5,1.5)-- (1,1) -- (2,2);\draw (2.5,1) node{\textbf{{\Large $+$}}}; \draw[thick] (4,-0.5) -- (4,1) -- (3,2) -- (4,1) -- (5,2)--(4.8,1.8)--(4.6,2)--(4.8,1.8)--(4.5,1.5)--(4,2);\draw (5.5,1) node{\textbf{{\Large $+$}}}; \draw[thick] (7,-0.5) -- (7,1) -- (6,2) -- (7,1) -- (8,2)--(7.5,1.5)--(7.25,1.75)--(7.5,2)--(7.25,1.75)--(7,2);\draw[thick] (0.4,2.1) arc (30:150:0.2cm);\draw[thick] (2,2.1) arc (45:135:0.67cm);\draw[thick] (3.95,2.1) arc (30:150:0.58cm);\draw[thick] (5,2.1) arc (45:135:0.3cm);\draw[thick] (6.95,2.1) arc (30:150:0.58cm);\draw[thick] (8.05,2.1) arc (45:135:0.35cm);
 	\end{tikzpicture}
 	\begin{tikzpicture}
 	\draw (-0.5,1) node{\textbf{{\Large $+$}}};\draw[thick] (1,-0.5) -- (1,1) -- (0,2) -- (0.35,1.65)--(0.65,2)--(0.35,1.65)-- (1,1) --(1.65,1.65)--(1.35,2)--(1.65,1.65) -- (2,2);\draw (2.5,1);  \draw (3.5,1) node{\textbf{{\Large $+$}}}; \draw[thick] (7,-0.5) -- (7,1) -- (6,2) -- (6.5,1.5)--(6.75,1.75)--(6.5,2)--(6.75,1.75)--(7,2)-- (6.5,1.5) -- (7,1)-- (8,2);
 	\draw[thick] (0.6,2.1) arc (30:150:0.35cm);\draw[thick] (2,2.1) arc (30:150:0.35cm);\draw[thick] (6.44,2.1) arc (30:150:0.29cm);\draw[thick] (8,2.1) arc (45:135:0.67cm);
 	\end{tikzpicture} 
 	\caption{$W_1^2(p)$ The root label $p$ is omitted.}\label{fig:W21}
 \end{figure}
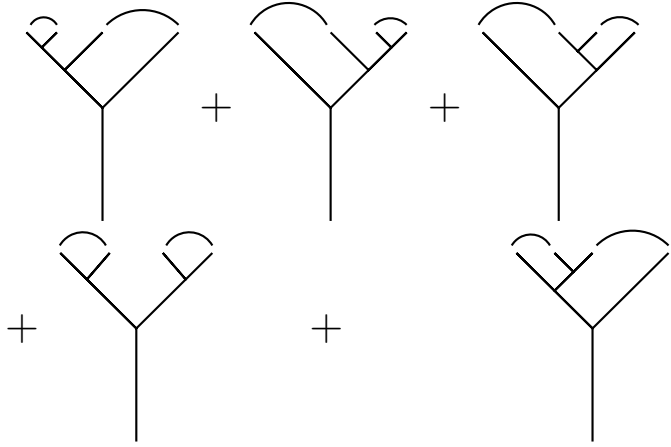
\end{rem}
\section{The antipode}
In a graded connected Hopf Algebra there is a canonical antipode $S$ whose expression is given by the convolution inverse of the identity:
\begin{equation}
m\left(S\otimes \text{Id}\right)\Delta = m\left(I\otimes \text{S}\right)\Delta=\eta\cdot\epsilon
\end{equation}
with $m$ the product, $\Delta$ the co-product, $\eta$ the unit and $\epsilon$ the co-unit. Explicitly, in the Loday-Ronco Hopf Algebra, we have
\begin{equation}
S(t)=-t-S(t_1)\ast t_2
\end{equation}
where in Sweedler notation
\begin{equation}
\Delta t = \sum t_1\otimes t_2
\end{equation}
is the co-product in $k[Y^\infty]$ induced by (\ref{eq:coproductperm}).
For instance, $S(\mathbf{1})=-\mathbf{1}$ because $\mathbf{1}$ is primitive and $S(\mathbf{12})=(\mathbf{21})$ and also $S(\mathbf{21})=(\mathbf{12})$. This suggests that a map $\psi^\ast S$ induced by the antipode on the vector space of correlations functions should give 
$$(\psi^\ast S)(W^0_4)=W^0_4.$$
More generally, from $S((\mathbf{1}))=-(\mathbf{1})$ we see that
$$S((\mathbf{1})\ast(\mathbf{1})\ast\dots\ast(\mathbf{1}))=(-1)^n(\mathbf{1})\ast(\mathbf{1})\ast\dots\ast(\mathbf{1})$$
with $n$ factors in the $\ast$ product and then $$(\psi^\ast S)(W_{n+2}^0)=(-1)^nW_{n+2}^0.$$
Since $n$ is identified with the Euler characteristic we see that the induced map respects the grading of $k[Y^\infty]$ for $g=0$.
For the moment it is not clear how to extend this simple computation to the case of graphs with loops. The tree |  being the identity in $k[Y^\infty]$ and representing $W_2^0$ has trivially $S(|)=|$, but it is not clear if the change of topology from a tree to a graph with loops shouldn't change dramatically the Hopf Algebra structure or if even the full algebra of these class of graphs with loops of arbitrary order is yet an Hopf Algebra. There are examples of Hopf Algebra of general graphs that are well documented in the literature (see for instance \cite{MR2967484,MR1303288}) but we do not know at the moment if they can be adapted to this framework. In particular, the Euler characteristic of a graph has a different meaning of the one used here.

\section{Discussion}
We have seen that an extension of the Hopf algebra planar binary trees of Loday and Ronco provides a representation of a vector space, whose nature is still to be clarified, generated by the set of correlations functions typical from Matrix Models and that satisfy the recursion formula of Eynard and Orantin. This extension, obtained by identifying nearest neighbor leaves through a single edge, is only necessary for $g>0$. This procedure moves from planar binary trees to planar graphs with loops, binary in the internal vertices in the sense that each internal vertex has two children. In the process of showing that this class of graphs satisfies the full recursion formula of Eynard-Orantin we have provided an explicit formula for the solutions, that are obtained first by getting the sum of all planar binary trees of order $n$ through the $\ast$ product of $n$ factors of $\mathbf{(1)}$ computed in $k[Y^\infty]$ and then by connecting two consecutive leaves in all possible ways up to the genus $g$.

\bibliographystyle{amsplain}
\bibliography{biblioHopf}
\end{document}